\documentclass[journal]{IEEEtran}
%


%
\usepackage{amsbsy}
\usepackage{floatflt} 

\usepackage{amsmath}
\usepackage{amssymb}
\usepackage{times}
\usepackage{graphics}
\usepackage{graphicx}
\usepackage{xspace}
\usepackage{paralist} 
\usepackage{setspace} 
\usepackage{xypic}
\xyoption{curve}
\usepackage{latexsym}
\usepackage{theorem}
\usepackage{ifthen}
\usepackage{subfigure}
\usepackage{booktabs}
\usepackage{algorithm}
\usepackage{algorithmic}
\usepackage{color}
\ifCLASSINFOpdf
\else
\fi

{\theoremheaderfont{\it} \theorembodyfont{\rmfamily}
\newtheorem{theorem}{Theorem}
\newtheorem{lemma}[theorem]{Lemma}
\newtheorem{definition}[theorem]{Definition}

}

\newcounter{brojac}

{\theoremheaderfont{\it} \theorembodyfont{\rmfamily}
\newtheorem{assumption}[brojac]{Assumption}
}


\hyphenation{op-tical net-works semi-conduc-tor}

\begin{document}
%
\title{Fast Distributed Gradient Methods}
%
%
%

\author{Du$\check{\mbox{s}}$an Jakoveti\'c,~\IEEEmembership{Student~Member,~IEEE,}
        Jo\~ao Xavier,~\IEEEmembership{Member,~IEEE,}
        and~Jos\'e M.~F.~Moura,~\IEEEmembership{Fellow,~IEEE}
        \thanks{The work of the first and second authors was supported by: the Carnegie Mellon$|$Portugal Program under a grant from the Funda\c{c}\~ao de Ci$\hat{\mbox{e}}$ncia e Tecnologia~(FCT) from Portugal; by FCT grants CMU-PT/SIA/0026/2009 and SFRH/BD/33518/2008 (through the Carnegie Mellon$|$Portugal Program managed by ICTI); by ISR/IST plurianual funding (POSC program, FEDER), and the work of the first and third authors was funded by AFOSR grant~FA95501010291 and by NSF grant~CCF1011903, while the first author was a doctoral or postdoctoral student within the Carnegie Mellon$|$Portugal Program. D. Jakoveti\'c is with University of Novi Sad, BioSense Center, 21000 Novi Sad, Serbia. J. Xavier is with Instituto de Sistemas e Rob\'otica~(ISR), Instituto Superior T\'ecnico~(IST), University of Lisbon, 1049-001 Lisbon, Portugal.  J.~M.~F.~Moura is with Department of Electrical and Computer Engineering, Carnegie Mellon University, Pittsburgh, PA 15213, USA. Authors e-mails: djakovet@uns.ac.rs,
jxavier@isr.ist.utl.pt, moura@ece.cmu.edu}}

\maketitle

\begin{abstract}
We study distributed optimization problems when~$N$ nodes minimize the sum of their individual costs subject to a common vector variable. The costs are convex, have Lipschitz continuous gradient (with constant~$L$), and bounded gradient. We propose two fast distributed gradient algorithms based on the centralized Nesterov gradient algorithm and establish their convergence rates in terms of the per-node communications~$\mathcal{K}$ and the per-node gradient evaluations~$k$.
 Our first method,
 Distributed Nesterov Gradient, achieves
rates~$O\left( {\log \mathcal{K}}/{\mathcal{K}}\right)$ and~$O\left({\log k}/{k}\right)$.
 Our second method, Distributed Nesterov gradient with Consensus iterations, assumes at all nodes knowledge of~$L$ and
 $\mu(W)$ -- the second largest singular value of the $N \times N$ doubly stochastic weight matrix~$W$. It achieves
rates~$O\left( {1}/{\mathcal{K}^{2-\xi}}\right)$ and~$O\left( {1}/{k^2}\right)$ ($\xi>0$ arbitrarily small).
 Further, we give with both methods explicit dependence of the convergence constants on $N$ and $W$. 
 Simulation examples
illustrate our findings.

\end{abstract}

\begin{IEEEkeywords}
Distributed optimization, convergence rate, Nesterov gradient, consensus.
\end{IEEEkeywords}

%
\IEEEpeerreviewmaketitle

\section{Introduction}
\label{section-introduction}
\IEEEPARstart{D}{istributed} computation and optimization have been
studied for a long time, e.g.,~\cite{Tsitsiklis_Distr_Opt,tsitsiklisThesis84}, and have received renewed interest, motivated by applications in sensor~\cite{Rabbat}, multi-robot~\cite{johansson-negotiation}, or cognitive networks~\cite{bazerque_lasso}, as well as in distributed control~\cite{NecoaraMPC} and learning~\cite{BoydADMoM}.
%
%
%
%
This paper focuses on the problem where~$N$ nodes (sensors, processors, agents)
 minimize a sum of convex functions $f(x):=\sum_{i=1}^N f_i(x)$ subject to a common variable $x \in {\mathbb R}^d$. Each function $f_i: {\mathbb R}^d \rightarrow {\mathbb R}$ is convex and known only to node $i$. The underlying network is generic and connected.

To solve this and related problems, the literature proposes
several distributed gradient like methods, including:~\cite{nedic_T-AC} (see also~\cite{nedic_novo,asu-random,Matei});~\cite{SayedConf} (see also \cite{SayedOptim});~\cite{duchi} (see also~\cite{Rabbat,Rabbat-Consensus-Dual-Avg}); and~\cite{Sonia-Martinez}.
%
%
%
When the nodes lack global knowledge of the network parameters, reference~\cite{duchi} establishes, for the distributed dual averaging algorithm therein, rate~$O\left(\frac{1}{(1-\mu(W))}\frac{\log (N k)} {k^{1/2}}\right)$, where~$k$ is the number of communicated $d$-dimensional vectors per node, which also equals the number of iterations (gradient evaluations per node,) and $\mu(W)$ is
 the second largest singular value of the underlying $N \times N$ doubly stochastic
 weight matrix~$W$. Further, when $\mu(W)$ is known to the nodes, and after optimizing the step-size, \cite{duchi} shows the convergence rate to be~$O\left(\frac{1}{(1-\mu(W))^{1/2}}\frac{\log ( N k) }{k^{1/2}}\right)$.

%
%

    \textbf{Setup}. The class of functions usually considered in the references above are more general than we consider here, namely, they assume that the $f_i$'s are (possibly) non-differentiable and convex, and:
    \begin{inparaenum}[1)]
    \item for unconstrained minimization, the $f_i$'s have bounded gradients, while
    \item for constrained minimization, they are Lipschitz continuous over the constraint set.
    \end{inparaenum}
    In contrast, we assume the class $\mathcal{F}$ of convex $f_i$'s that have Lipschitz continuous and bounded gradients.

    It is well established in centralized optimization,~\cite{Nesterov-Gradient}, that one expects faster convergence rates on classes of more structured functions; e.g., for convex, non-smooth functions, the best achievable rate for centralized (sub)gradient methods is~$O(1/\sqrt{k})$, while, for convex functions with Lipschitz continuous gradient, the best rate is $O(1/k^2)$, achieved, e.g., by the Nesterov gradient method~\cite{Nesterov-Gradient}. Here $k$ is the number of iterations, i.e., the number of gradient evaluations.

    \textbf{Contributions}. Building from the centralized Nesterov gradient method, we develop for the class $\mathcal F$ two distributed gradient methods and prove their convergence rates, in terms of the number of per-node communications~$\mathcal{K}$, the per-node gradient evaluations~$k$, and the network topology. Our first method, the Distributed Nesterov Gradient (D--NG), uses
    one communication per~$k$ (it has $k=\mathcal K$) and achieves convergence
    rate~$O\left(\frac{1}{(1-\mu)^{p+\xi}} \left[ \,\frac{\log k}{k} + \frac{\sqrt{N} \log^{1/2} k}{k^{3/2}} + \frac{N}{k^2}\,\right] \right)$, where $p=3$ and $\xi>0$ is an arbitrarily small quantity, when the nodes have no global knowledge of the parameters underlying the optimization problem and the network: $L$ and~$G$ the $f_i$'s gradient's Lispchitz constant and the gradient bound, respectively, $\mu:=\mu(W)$ the second largest singular value of $W$, and $R$ a bound on the distance to a solution. When $L$ and $\mu$ are known by all, D--NG with optimized step-size achieves the same rate with $p$ reduced to~$1$.

    Our second method, Distributed Nesterov gradient with Consensus iterations (D--NC), assumes global knowledge on~$\mu$ and $L$ and achieves rates~$O\left(
    \frac{1}{\left[(1-\mu) \mathcal{K}^{1-\xi}\right]^2}+
    \frac{\sqrt{N}}{\left[(1-\mu) \mathcal{K}^{1-\xi}\right]^3}
    +
    \frac{N}{\left[(1-\mu) \mathcal{K}^{1-\xi}\right]^4}\right)$  and $O\left(\frac{1}{k^2} + \frac{\sqrt{N}}{k^3} + \frac{N}{k^4}\right)$. Further, we establish that, for the class~$\mathcal{F}$, both our methods (achieving at least~$O(\log k/k)$) are strictly better than the distributed (sub)gradient method~\cite{nedic_T-AC} and the distributed dual averaging  in~\cite{duchi},
    even when these algorithms are restricted to functions in~$\mathcal F$. We show analytically that~\cite{nedic_T-AC} cannot be better than~$\Omega\left( 1/k^{2/3}\right)$ and $\Omega\left( 1/{\mathcal K}^{2/3}\right)$ (see Subsection~\ref{subsection-comparisons} for details), and by simulation examples that~\cite{nedic_T-AC} and~\cite{duchi} perform similarly.

    \textbf{Distributed versus centralized Nesterov gradient methods}.
    The centralized Nesterov gradient method does not require bounded gradients -- an assumption that we make for our
     distributed methods. We prove here that if we drop the bounded gradients assumption,
     the convergence rates that we establish do not hold for either of our algorithms. (It may be possible to \emph{replace} the bounded gradients assumption with a weaker requirement.)
     In fact, the worst case convergence rates
     of D--NG and D--NC become arbitrarily slow. (See Subsection~\ref{subsection-discussion-bounded-gradients} for details.)
     This important result illustrates a distinction between the allowed function classes by the
     centralized and distributed methods. The result is not specific
     to our accelerated methods; it can be shown that
     the standard distributed gradient method in~\cite{nedic_T-AC}
     is also arbitrarily slow when the assumption of bounded gradients is dropped~(while convexity
     and Lipschitz continuous gradient~hold)~\cite{arxivVersion}.

%


     \textbf{Remark}. Since we make use here of the bounded gradients assumption,
      an interesting research direction is to look for a weaker requirement, e.g.,
      boundedness of all $x_i^\star \in \mathrm{arg\,min}_{x \in {\mathbb R}^d}f_i(x)$ ($\|x_i^\star\| \leq C <\infty$, $\forall x_i^\star$, $\forall i$.)
       In fact, with both D--NG and D--NC,
        we prove elsewhere that we can assume \emph{different}
     setups (corresponding to broad classes of functions) and still
      achieve the same convergence rates in terms of $k$ and $\mathcal K$. With D--NG,
     we can replace the bounded gradients assumption
     with the following: there exists $b,B>0$ such that,
      $\forall i$, $f_i(x) \geq b\|x\|$ whenever $\|x\|\geq B$.
      For a natural extension of D--NC, we can replace the unconstrained problems with
      Lipschitz continuous and bounded gradients assumed here
      by a constrained optimization problem (compact, convex constraint set $\mathcal X$)
      where the $f_i$'s have Lipschitz continuous gradient on
      a certain compact set that includes~$\mathcal X$.
Due to lack of space,
      these alternatives are pursued elsewhere.

\textbf{Remark}. We comment on references~\cite{AnnieChen}
and~\cite{AnnieChenThesis}~(see also Subsection~\ref{subsection-comparisons} and~\cite{arxivVersion}).
They develop accelerated proximal methods for time varying networks that resemble D--NC.
The methods in~\cite{AnnieChen} and~\cite{AnnieChenThesis}
use only one consensus algorithm per outer iteration~$k$, while we use two with D--NC.
Adapting the results in~\cite{AnnieChen,AnnieChenThesis} to our framework, it can be shown
that the optimality gap bounds in~\cite{AnnieChen,AnnieChenThesis} expressed in terms of $N,1-\mu(W),$ and $\mathcal K$
have the same or worse (depending on the variant of their methods) dependence on $\mathcal K$
 and $\mu(W)$ than the one we show for D--NC, and a worse dependence on $N$. (See~Subsection~\ref{subsection-comparisons}
 and~\cite{arxivVersion}.)

In addition to distributed gradient methods, literature also proposes distributed augmented Lagrangian dual or ordinary dual methods~\cite{bazerque_lasso,cooperative-convex,Joao-Mota,Joao-Mota-2,Uday,Terelius,johansson-accelerated-grad-conf,johansson-accelerated-grad-journ}. These are based on the augmented Lagrangian (or ordinary) dual of the original problem. They in general have significantly more complex iterations than the gradient type methods that we consider in this paper, due to solving local optimization problems at each node, at each iteration, but may have a lower total communication cost. 
%
    Reference~\cite{Joao-Mota} uses the Nesterov gradient method to propose an \emph{augmented Lagrangian dual} algorithm but does not analyze its convergence rate. In contrast, ours are \emph{primal gradient algorithms}, with no notion of Lagrangian dual variables, and we establish the convergence rates of our algorithms.
    References~\cite{johansson-accelerated-grad-conf,johansson-accelerated-grad-journ}
    study both the resource allocation and the problems that we consider (see~\eqref{eqn-opt-prob-original}).
    For~\eqref{eqn-opt-prob-original},~\cite{johansson-accelerated-grad-conf,johansson-accelerated-grad-journ}
    apply certain accelerated gradient methods on the~\emph{dual~problem}, in contrast with our primal gradient methods.
     Finally,~\cite{NecoaraMPC} uses the Nesterov gradient algorithm to propose a decomposition method based on a smoothing technique,
    for a problem formulation different than ours and on the~\emph{Lagrangian dual problem}.

%
%
%
%
%
%
%

\textbf{Paper organization}. The next paragraph introduces notation. Section~\ref{section-problem-model} describes the network and optimization models that we assume. Section~\ref{section-algorithms} presents our algorithms, the distributed Nesterov gradient and the distributed Nesterov gradient with consensus iterations, D--NG and D--NC for short.
Section~\ref{section-inexact-oracle} explains the framework of the (centralized) inexact Nesterov gradient method; we use this framework to establish the convergence rate results for D--NG and D--NC. Sections~\ref{section-converg-rate-D-NG} and~\ref{section-converg-rate-D-NC} prove convergence rate results for the algorithms D--NG and D--NC, respectively.
Section~\ref{section-comparisons} compares our algorithms D--NG and D--NC with existing distributed gradient type methods, discusses the algorithms' implementation, and discusses the need for our Assumptions. Section~\ref{section-simulations} provides simulation examples.
Finally, we conclude in~Section~\ref{section-conclusion}. Proofs of certain lengthy arguments are relegated to Appendix.

\textbf{Notation}. 
We index by a subscript $i$ a (possibly vector) quantity assigned to node $i$; e.g., $x_i(k)$ is node $i$'s estimate at iteration $k$. Further, we denote by: ${\mathbb R}^d$ the $d$-dimensional real coordinate space; $\mathbf{j}$ the imaginary unit ($\mathbf{j}^2 = -1$); $A_{lm}$ or $[A]_{lm}$ the entry in the $l$-th row and $m$-th column of a matrix $A$; $a^{(l)}$ the $l$-th entry of vector $a$; $(\cdot)^\top$ the transpose and $(\cdot)^H$ the conjugate transpose; $I$, $0$, $\mathbf{1}$, and $e_i$, respectively, the identity matrix, the zero matrix, the column vector with unit entries, and the $i$-th column of $I$;
$\oplus$ and $\otimes $ the direct sum and Kronecker product of matrices, respectively; $\| \cdot \|_l$ the vector (respectively, matrix) $l$-norm of its vector (respectively, matrix) argument; $\|\cdot\|=\|\cdot\|_2$ the Euclidean (respectively, spectral) norm of its vector (respectively, matrix) argument ($\|\cdot\|$ also denotes the modulus of a scalar); $\lambda_i(\cdot)$ the $i$-th smallest \emph{in modulus} eigenvalue; $A \succeq 0$ means that a Hermitian matrix $A$ is positive semi-definite; $\lceil a \rceil$ the smallest integer
not smaller than a real scalar $a$; $\nabla \phi(x)$ and $\nabla^2 \phi(x)$ the gradient and Hessian at $x$ of a twice differentiable function $\phi: {\mathbb R}^d \rightarrow {\mathbb R}$, $d \geq 1$.
For two positive sequences $a_k$ and $b_k$, the following is the standard notation: $b_k = O(a_k)$ if $\limsup_{k \rightarrow \infty}\frac{b_k}{a_k}<\infty$; $b_k = \Omega (a_k)$ if $\liminf_{k \rightarrow \infty}\frac{b_k}{a_k}>0$; and $b_k=\Theta(a_k)$ if $b_k = O(a_k)$ and $b_k = \Omega(a_k)$.
\vspace{-0mm}
\section{Problem model}
\label{section-problem-model}
This section introduces the network and optimization models that we assume.

\textbf{Network model}. We consider a (sparse) network $\mathcal{N}$ of~$N$ nodes (sensors, processors, agents,) each communicating only locally, i.e., with a subset of the remaining nodes. The communication pattern is captured by the graph $\mathcal{G} = (\mathcal{N} ,E),$ where $E \subset \mathcal{N} \times \mathcal{N}$ is the set of links. The graph~$\mathcal{G}$ is connected, undirected and simple (no~self/multiple~links.)

\textbf{Weight matrix.} We associate to the graph~$\mathcal{G}$
 a symmetric, doubly stochastic (rows and columns sum to one and all the entries are non-negative), $N \times N$ weight matrix $W$,
 with, for $i \neq j$, $W_{ij}>0$ if and only if, $\{i,j\} \in E,$
  and $W_{ii}=1-\sum_{j \neq i}W_{ij}.$ Denote by $\widetilde{W}=W-J,$
   where $J:=\frac{1}{N} {\mathbf 1} {\mathbf 1}^\top$ is the ideal consensus matrix. We let $\widetilde{W}=Q \widetilde{\Lambda} Q^\top$, where $\widetilde{\Lambda}$ is the diagonal matrix with $\widetilde{\Lambda}_{ii}=\lambda_i(\widetilde{W})$,
     and $Q=[q_1,...,q_N]$ is the matrix of the eigenvectors of~$\widetilde{W}$.
%
%
With D--NC, we impose Assumption~\ref{assumption-network}~(a) below; with D--NG, we require both
Assumptions~\ref{assumption-network}~(a)~and~(b).
\begin{assumption}[Weight matrix]
\label{assumption-network}
We assume that (a) $\mu(W)<1$; and (b)~$W\succeq {\eta}\,I,$ where ${\eta}<1$ is an arbitrarily small positive quantity.
\end{assumption}
Note that Assumption~\ref{assumption-network}~(a) can be fulfilled only by a connected network.
%
%
%
%
%
%
Assumption~\ref{assumption-network}~(a) is standard and is also needed with the existing
algorithms in~\cite{nedic_T-AC,duchi}. For a connected network,
nodes can assign the weights $W$ and fulfill Assumption~\ref{assumption-network}~(a), e.g., through the Metropolis weights~\cite{BoydFusion};
to set the Metropolis weights, each node needs to know its own degree and its neighbors' degrees.
 Assumption~\ref{assumption-network}~(b) required by D--NG is not common in the literature.
  We discuss the impact of Assumption~\ref{assumption-network}~(b) in Subsection~\ref{subsection-comparisons}.

\textbf{Distributed optimization model}.
The nodes solve the unconstrained problem:
\vspace{-0mm}
\begin{equation}
\label{eqn-opt-prob-original}
\mbox{minimize} \:\: \sum_{i=1}^N f_i(x) =:f(x).
\end{equation}
The function $f_i: {\mathbb R}^d \rightarrow {\mathbb R}$ is known only to node $i$. We impose Assumptions~\ref{assumption-f-i-s} and~\ref{assumption-bdd-gradients}.
\begin{assumption}[Solvability; Lipschitz continuous gradient]
\label{assumption-f-i-s}
\begin{enumerate}[(a)]
\item There exists a solution $x^\star \in {\mathbb R}^d$ with $f(x^\star)=\inf_{x \in {\mathbb R}^d}f(x)=:f^\star$.
\item $\forall i$, $f_i$ is convex, differentiable, with Lipschitz continuous derivative with constant~$L\in [0,\infty)$:
$
\|\nabla f_i(x) - \nabla f_i(y)\| \leq L \|x-y\|,\:\:\:\forall x,y \in {\mathbb R}^d.
$
\end{enumerate}
\end{assumption}
%
%
%
%
%
%
\begin{assumption}[Bounded gradients]
\label{assumption-bdd-gradients}
$\exists G \in [0,\infty)$ such that, $\forall i$, $\|\nabla f_i(x)\|\leq G$, $\forall x \in {\mathbb R}^d.$
\end{assumption}
%
%
%
%
%
%
%

Examples of $f_i$'s that satisfy Assumptions~\ref{assumption-f-i-s}--\ref{assumption-bdd-gradients} include the logistic and Huber losses (See Section~\ref{section-simulations}), or
the ``fair'' loss in robust statistics, $\phi: {\mathbb R} \mapsto {\mathbb R}$, $\phi(x) =
b_0^2 \left( {\frac{|x|}{b_0}} - \log \left( 1+\frac{|x|}{b_0}\right) \right)$, where $b_0$ is a positive parameter,
e.g.,~\cite{Blatt-Hero-Gauchman}.
Assumption~\ref{assumption-f-i-s} is precisely the assumption required by~\cite{Nesterov-Gradient} in the convergence
analysis of the (centralized) Nesterov gradient method.
With respect to the centralized Nesterov gradient method~\cite{Nesterov-Gradient},
we additionally require bounded gradients as given by Assumption~\ref{assumption-bdd-gradients}.
We explain the need for Assumption~\ref{assumption-bdd-gradients} in Subsection~\ref{subsection-discussion-bounded-gradients}.
\vspace{-0mm}
\section{Distributed Nesterov based algorithms}
\label{section-algorithms}
We now consider our two proposed algorithms.
Subsection~\ref{subsection-D-NG} presents algorithm D--NG, while
subsection~\ref{subsection-D-NC} presents algorithm D--NC.
\vspace{-0mm}
\subsection{Distributed Nesterov gradient algorithm~(D--NG)}
\label{subsection-D-NG}


Algorithm D--NG generates the sequence
$\left(x_i(k),y_i(k)\right)$, $k=0,1,2,...,$ at each node $i$, where $y_i(k)$ is an auxiliary variable.
D--NG is initialized by $x_i(0)=y_i(0) \in {\mathbb R}^d$, for all $i$. The update at node $i$ and $k=1,2,...$~is:
%
%
%
\vspace{-0mm}
\begin{eqnarray}
\label{eqn-our-alg-scalar}
x_i(k) &=& \sum_{j \in O_i} W_{ij}\,y_j(k-1)- \alpha_{k-1} \nabla f_i (y_i(k-1)) \\
\label{eqn-our-alg-druga-scalar}
y_i(k) &=& x_i(k) + \beta_{k-1} \left( x_i(k) -x_i(k-1)\right).
\end{eqnarray}
\vspace{-0mm}
Here, $W_{ij}$ are the averaging weights (the entries of $W$), and
$O_i$ is the neighborhood set of node $i$ (including $i$). The step-size $\alpha_k$
 and the sequence $\beta_k$ are:
\begin{eqnarray}
\label{eqn-alpha-k}
\alpha_k = \frac{c}{k+1},\:\:\:\:\:\:c>0;\:\:\:\:\:\:
\beta_k = \frac{k}{k+3},\:\:\:\:\:\:k=0,1,...
\end{eqnarray}
With
 algorithm~\eqref{eqn-our-alg-scalar}--\eqref{eqn-our-alg-druga-scalar}, each node $i$, at each iteration $k$, performs
 the following: 1) broadcasts its
 variable $y_i(k-1)$ to all its neighbors $j \in O_i$; 2) receives $y_j(k-1)$
  from all its neighbors $j \in O_i$; 3) updates
  $x_i(k)$ by weight-averaging its own $y_i(k-1)$ and its neighbors variables $y_j(k-1)$, and
  performs a negative gradient step with respect to $f_i$; and 4) updates $y_i(k)$ via
   the inexpensive update in~\eqref{eqn-our-alg-druga-scalar}. To avoid notation explosion
   in the analysis further ahead, we assume throughout the paper, with both D--NG and D--NC, equal initial estimates $x_i(0)=y_i(0)=x_j(0)=y_j(0)$
    for all $i,j;$ e.g., nodes can set them to~zero.

We adopt the sequence $\beta_k$ as in the centralized fast gradient method by Nesterov~\cite{Nesterov-Gradient}; see also~\cite{TsengFastGradient,Vandenberghe}.
With the centralized Nesterov gradient, $\alpha_k=\alpha$ is constant along the iterations.
 However, under a constant step-size, algorithm~\eqref{eqn-our-alg-scalar}--\eqref{eqn-our-alg-druga-scalar} does not converge to the exact solution, but only to a solution neighborhood. More precisely, in general,
 $f(x_i(k))$ does not converge to $f^\star$ (See~\cite{cdc-submitted} for details.) We force $f(x_i(k))$
  to converge to $f^\star$ with~\eqref{eqn-our-alg-scalar}--\eqref{eqn-our-alg-druga-scalar} by adopting a diminishing step-size $\alpha_k$,
  as in~\eqref{eqn-alpha-k}. The constant $c>0$ in~\eqref{eqn-alpha-k} can be arbitrary (See also ahead Theorem~\ref{theorem-basic-convergence-result}.)

%

\textbf{Vector form}.
Let $x(k)=(x_1(k)^\top, x_2(k)^\top,...,x_N(k)^\top)^\top$, $y(k)=(y_1(k)^\top, y_2(k)^\top,...,y_N(k)^\top)^\top$,
and introduce $F: {\mathbb R}^{N d} \rightarrow {\mathbb R}^N$ as:
$
F(x) = F(x_1,x_2,...,x_N) = (f_1(x_1),f_2(x_2),...,f_N(x_N))^\top.
$
Then, given initialization $x(0)=y(0),$ D--NG in vector form is:
\vspace{-0mm}
\begin{eqnarray}
\label{eqn-our-alg}
x(k) &=& (W \otimes I) y(k-1) - \alpha_{k-1} \nabla F (y(k-1)) \\
\label{eqn-our-alg-druga}
y(k) \hspace{-2mm}&=& \hspace{-2mm} x(k) + \beta_{k-1} \left( x(k) -x(k-1)\right),\, k=1,2,...,
\end{eqnarray}
\vspace{-0mm}
where the identity matrix is of size $d$ -- the dimension of the optimization variable in~\eqref{eqn-opt-prob-original}.

\subsection{Algorithm D--NC}
\label{subsection-D-NC}
%
%
%

Algorithm D--NC uses a~\emph{constant step-size} $\alpha \leq 1/(2L)$ and operates in two time scales.
In the outer (slow time scale) iterations~$k$, each node $i$
updates its solution estimate $x_i(k)$, and updates an auxiliary variable $y_i(k)$ (as with the D--NG);
 in the inner iterations~$s$, nodes
 perform two rounds of consensus with the number of inner iterations given in~\eqref{eqn-consensus} and~\eqref{eqn-consensus-2} below, respectively. D--NC is Summarized in Algorithm~1.
 %
 %
  %
   %
   %
   %
   %
   %

%
%
\begin{algorithm}
\label{algorithm-D-NC}
\caption{  Algorithm D--NC }
\begin{algorithmic}[1]
{\small{
    \STATE Initialization: Node $i$ sets: $x_i(0)=y_i(0) \in {\mathbb R}^d$; and $k=1.$
        \STATE Node $i$ calculates:
        $
        x_i^{(a)}(k) = y_i(k-1) - \alpha \nabla f_i(y_i(k-1)).
        $
        \STATE (First consensus) Nodes run average consensus initialized by~$x_i^{(c)}(s=0,k)=x_i^{(a)}(k)$:
        \begin{eqnarray}
        x_i^{(c)}(s,k) &=& \sum_{j \in O_i} W_{ij} x_j^{(c)}(s-1,k),\:\:s=1,2,...,\tau_x(k) \nonumber \\
        \label{eqn-consensus}
        \tau_x(k) &=& \left\lceil \frac{2 \log k }{-\log \mu(W)} \right\rceil,
        \end{eqnarray}
        and set~$x_i(k):=x_i^{(c)}(s=\tau_x(k),k)$.
        \STATE Node $i$ calculates $
        y_i^{(a)}(k) = x_i(k) + \beta_{k-1} \left( x_i(k) - x_i(k-1)\right).
        $
        \STATE (Second consensus) Nodes run average consensus initialized by~$y_i^{(c)}(s=0,k)=y_i^{(a)}(k)$:
        \begin{eqnarray}
        y_i^{(c)}(s,k) &=& \sum_{j \in O_i} W_{ij} y_j^{(c)}(s-1,k),\:\:s=1,2,...,\tau_y(k) \nonumber \\
        \label{eqn-consensus-2}
        \tau_y(k) &=& \left\lceil \frac{\log 3}{-\log \mu(W)} + \frac{2 \log k }{-\log \mu(W)} \right\rceil,
        \end{eqnarray}
        and set~$y_i(k):=y_i^{(c)}(s=\tau_y(k),k)$.
        \STATE Set $k \mapsto k+1$ and go to step 2.}}
\end{algorithmic}
\vspace{0mm}
\end{algorithm}
The number of inner consensus iterations in~\eqref{eqn-consensus} increases as $\log k$ and
depends on the underlying network through~$\mu(W)$. Note an important difference between D--NC and D--NG. D--NC uses explicitly a number of consensus steps at each~$k$. In contrast, D--NG does not explicitly
use multi-step consensus at each~$k$; consensus occurs implicitly, similarly to~\cite{nedic_T-AC,duchi}.

 %
 %

%
%

\textbf{Vector form}. 
Using the same compact notation for $x(k)$, $y(k)$, and $\nabla F(y(k))$
 as with D--NG, D--NC in vector form is:
%
%
\vspace{-0mm}
\begin{eqnarray}
\label{eqn-our-alg-cons}
x(k) &=& (W \otimes I)^{\tau_x(k)} \left[\,\, y(k-1) - \alpha\, \nabla F(y(k-1)) \,\,\right] \\
\label{eqn-our-alg-cons-druga}
y(k) \hspace{-3.5mm}&=& \hspace{-3.5mm}(W \otimes I) ^{\tau_y(k)} \left[ x(k) + \beta_{k-1} (x(k)-x(k-1))\right].
\vspace{0mm}
\end{eqnarray}
The power $(W \otimes I)^{\tau_x(k)}$ in~\eqref{eqn-our-alg-cons}
corresponds to the first consensus in~\eqref{eqn-consensus}, 
 and the power~$(W \otimes I)^{\tau_y(k)}$ in~\eqref{eqn-our-alg-cons-druga}
  corresponds to the second consensus in~\eqref{eqn-consensus-2}.
   The connection between D--NC and the (centralized) Nesterov gradient
    method becomes clearer in Subsection~\ref{subsection-algs-in-framework}.
     The matrix powers~\eqref{eqn-our-alg-cons}--\eqref{eqn-our-alg-cons-druga}
     are implemented in a distributed way through multiple iterative steps --
     they require respectively $\tau_x(k)$ and $\tau_y(k)$ iterative (distributed) consensus steps.
      This is clear from the representation in~Algorithm~1.

\vspace{-0mm}
\section{Intermediate results: Inexact Nesterov gradient method}
\label{section-inexact-oracle}
%
We will analyze the convergence rates of D--NG and D--NC
by considering the evolution of the global averages $\overline{x}(k):=\frac{1}{N}\sum_{i=1}^N x_i(k)$
 and $\overline{y}(k):=\frac{1}{N}\sum_{i=1}^N y_i(k)$.
 We will show that, with both distributed methods,
 the evolution of $\overline{x}(k)$ and $\overline{y}(k)$
 can be studied through the framework of
 the inexact (centralized) Nesterov gradient method,
 essentially like the one in~\cite{inexact-oracle}.
 Subsection~\ref{subsection-inexact-oracle} introduces this framework and gives the
relation for the progress in one iteration. Subsection~\ref{subsection-algs-in-framework}
then demonstrates that we can cast our algorithms D--NG and D--NC
in this framework.

\subsection{Inexact Nesterov gradient method}
\label{subsection-inexact-oracle}
%
%
%
We next introduce the definition of a (pointwise) inexact first order oracle.

\begin{definition}[Pointwise inexact first order oracle]
\label{def-inexact-oracle}
Consider a function ${f}: {\mathbb R}^d \rightarrow \mathbb R$ that is convex and has Lipschitz continuous gradient with constant $L_{{f}}$. We say that
a pair $\left(\widehat{f}_y, \widehat{g}_y\right) \in  {\mathbb R} \times {\mathbb R}^d$ is a $\left( L_y,\delta_y\right)$ inexact oracle
of ${f}$ at point $y$ if:
\begin{eqnarray}
\label{eqn-inexact-oracle-prop-1}
\widehat{f}_y &+& \widehat{g}_y^\top \left( x - y\right) \leq {f}(x)
\leq \widehat{f}_y \\
&+& \widehat{g}_y^\top \left( x - y\right)
+ \frac{L_y}{2}\|x-y\|^2 + \delta_y,\: \forall x \in {\mathbb R}^d. \nonumber
\end{eqnarray}
\end{definition}

For any $y \in {\mathbb R}^d$, the pair $({f}(y), \nabla {f}(y))$ satisfies Definition~\ref{def-inexact-oracle} with $\left( L_y = L_{{f}},\delta_y = 0\right)$. If $\left(\widehat{f}_y, \widehat{g}_y\right) $ is a $\left( L_y,\delta_y\right)$ inexact oracle at $y$, then it is also
 a $\left( L_y^\prime,\delta_y\right)$ inexact oracle at $y$, with $L_y^\prime \geq L_y.$

\textbf{Remark}. The prefix pointwise
in Definition~\ref{def-inexact-oracle} emphasizes that we are concerned with finding $\left(\widehat{f}_y, \widehat{g}_y\right)$
 that satisfy~\eqref{eqn-inexact-oracle-prop-1} with
 $(L_y,\delta_y)$ \emph{at a fixed point $y$}. This differs from the conventional definition (Definition~1) in~\cite{inexact-oracle}.
Throughout, we always refer to the inexact oracle in the sense
of Definition~\ref{def-inexact-oracle} here and drop the prefix pointwise.

\textbf{Inexact Nesterov gradient method}. Lemma~\ref{lemma-progress-one-iteration} gives the progress in one iteration of the inexact (centralized) Nesterov gradient method for the unconstrained minimization of ${f}$. Consider a point $(\overline{x}(k-1),\overline{y}(k-1)) \in {\mathbb R}^d \times {\mathbb R}^d$, for some
fixed $k=1,2,...$
Let $\left( \widehat{f}_{k-1}, \widehat{g}_{k-1}\right)$ be
a $(L_{k-1},\delta_{k-1})$ inexact oracle of the function ${f}$ at point $\overline{y}(k-1)$ and:
%
%
%
\begin{eqnarray}
\label{eqn-lemma-updates}
\overline{x}(k) &=& \overline{y}(k-1) - \frac{1}{L_{k-1}} \widehat{g}_{k-1}\\
\overline{y}(k) &=& \overline{x}(k) + \beta_{k-1} \left( \overline{x}(k) - \overline{x}(k-1) \right).
\nonumber
\end{eqnarray}
%
%
%
\begin{lemma}[Progress per iteration]
\label{lemma-progress-one-iteration}
Consider the update rule~\eqref{eqn-lemma-updates} for some $k=1,2,...$ Then:
\begin{eqnarray}
&(k+1)^2 & \left( {f}(\overline{x}(k)) - {f}({x^\bullet})\right)
+
2 L_{k-1} \|\overline{v}(k) - {x^\bullet}\|^2 \nonumber \\
&\leq&
(k^2-1)\, \left( {f}(\overline{x}(k-1)) - {f}({x^\bullet})\right) \nonumber \\
\label{eqn-consensus-2}
&+&\hspace{-1mm} 2 L_{k-1} \|\overline{v}(k-1) - {x^\bullet}\|^2 + (k+1)^2 \delta_{k-1},
\end{eqnarray}
for any ${x^\bullet} \in {\mathbb R}^d$,
where $\gamma_k=2/(k+2)$ and
$
\overline{v}(k) = \frac{\overline{y}(k) - (1-\gamma_{k})\overline{x}(k)}{\gamma_{k}}.
$
\end{lemma}
Lemma~\ref{lemma-progress-one-iteration} is similar to~[\cite{inexact-oracle}, Theorem~5], although~\cite{inexact-oracle}
 considers a different accelerated Nesterov method.
 It is intuitive: the progress per iteration is the same
 as with the exact Nesterov gradient algorithm,
  except that it is
  deteriorated by the ``gradient direction inexactness'' ($(k+1)^2 \delta_{k-1}$).
    The proof follows the arguments of~\cite{inexact-oracle} and~\cite{Vandenberghe,Nesterov-Gradient,TsengFastGradient}
     and is in~\cite{arxivVersion}.
%
%
%
%
%
%
%
%
%
\vspace{-0mm}
\subsection{Algorithms D--NG and D--NC in the inexact oracle framework}
\label{subsection-algs-in-framework}
We now cast algorithms D--NG and D--NC in the inexact oracle framework.

\textbf{Algorithm D--NG}. Recall the global averages $\overline{x}(k):=\frac{1}{N}\sum_{i=1}^N x_i(k)$ and $\overline{y}(k):=\frac{1}{N}\sum_{i=1}^N y_i(k)$, and define:
\begin{eqnarray}
\label{eqn-f-k-g-k}
\widehat{f}_k \hspace{-2mm}&=&\hspace{-2mm} \sum_{i=1}^N \left\{f_i(y_i(k)) + \nabla f_i(y_i(k))^\top (\overline{y}(k)-y_i(k)) \right\}\\
\widehat{g}_k &=& \sum_{i=1}^N \nabla f_i(y_i(k)). \nonumber
\end{eqnarray}
Multiplying~\eqref{eqn-our-alg}--\eqref{eqn-our-alg-druga} from the left by $(1/N)(\mathbf{1}^\top \otimes I)$, using $(\mathbf{1}^\top \otimes I) (W \otimes I)=\mathbf{1}^\top \otimes I$,
letting $L_{k-1}^\prime:=\frac{N}{\alpha_{k-1}}$, and using~$\widehat{g}_k$ in~\eqref{eqn-f-k-g-k}, we obtain
 that $\overline{x}(k)$, $\overline{y}(k)$ evolve according to:
 %
 %
\begin{eqnarray}
\label{eqn-our-alg-CM}
\overline{x}(k) &=& \overline{y}(k-1) - \frac{1}{L_{k-1}^\prime} \widehat{g}_{k-1} \\
\overline{y}(k) &=& \overline{x}(k) + \beta_{k-1} \left( \overline{x}(k) -\overline{x}(k-1)\right), \nonumber
\end{eqnarray}
%
%
%
%
%
%
%
%
The following Lemma shows how we can analyze convergence of~\eqref{eqn-our-alg-CM} in the inexact oracle framework.
 Define~$\widetilde{y}_i(k):= y_i(k) - \overline{y}(k)$
and $\widetilde{y}(k):=(\widetilde{y}_1(k)^\top,...,\widetilde{y}_N(k))^\top.$
 Define analogously $\widetilde{x}_i(k)$ and $\widetilde{x}(k)$.
 We refer to $\widetilde{x}(k)$ and $\widetilde{y}(k)$ as the disagreement vectors,
 as they indicate how mutually apart the estimates of different nodes are.
\begin{lemma}
\label{lemma-relate-inexact-oracle-with-our-alg}
Let Assumption~\ref{assumption-f-i-s} hold.
Then,~$(\widehat{f}_k, \widehat{g}_k)$ in~\eqref{eqn-f-k-g-k} is a~$(L_k,\delta_k)$ inexact oracle
of~$f=\sum_{i=1}^N f_i$ at point~$\overline{y}(k)$ with constants $L_k = 2 N L$
and~$\delta_k=L \|\widetilde{y}(k)\|^2.$
\end{lemma}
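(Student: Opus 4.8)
The plan is to verify the two defining inequalities of the inexact oracle (Definition~\ref{def-inexact-oracle}) separately, exploiting that $\overline{y}(k)$ is the exact mean of the $y_i(k)$. Throughout I abbreviate $y_i := y_i(k)$, $\overline{y} := \overline{y}(k)$, and $\widetilde{y}_i := y_i - \overline{y}$, so that $\sum_{i=1}^N \widetilde{y}_i = 0$. The first step is a bookkeeping simplification: substituting the definitions of $\widehat{f}_k$ and $\widehat{g}_k$ from~\eqref{eqn-f-k-g-k} and collecting terms gives, for every $x \in {\mathbb R}^d$,
\begin{equation*}
\widehat{f}_k + \widehat{g}_k^\top(x - \overline{y}) = \sum_{i=1}^N \left[ f_i(y_i) + \nabla f_i(y_i)^\top (x - y_i) \right],
\end{equation*}
where the $\overline{y}$-dependent pieces cancel. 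This identity is the workhorse for both bounds.

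For the lower bound, I would invoke convexity of each $f_i$ (Assumption~\ref{assumption-f-i-s}), which gives $f_i(y_i) + \nabla f_i(y_i)^\top(x - y_i) \leq f_i(x)$ for every $i$. Summing over $i$ and using the identity above immediately yields $\widehat{f}_k + \widehat{g}_k^\top(x - \overline{y}) \leq \sum_{i=1}^N f_i(x) = f(x)$, which is the left inequality in~\eqref{eqn-inexact-oracle-prop-1}.

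For the upper bound, I would instead apply the standard descent-lemma consequence of $L$-Lipschitz gradients, $f_i(x) \leq f_i(y_i) + \nabla f_i(y_i)^\top(x - y_i) + \tfrac{L}{2}\|x - y_i\|^2$, and sum over $i$ to obtain $f(x) \leq \widehat{f}_k + \widehat{g}_k^\top(x-\overline{y}) + \tfrac{L}{2}\sum_{i=1}^N \|x - y_i\|^2$. The crucial step is then the mean-centering identity
\begin{equation*}
\sum_{i=1}^N \|x - y_i\|^2 = N\|x - \overline{y}\|^2 + \sum_{i=1}^N \|\widetilde{y}_i\|^2 = N\|x - \overline{y}\|^2 + \|\widetilde{y}(k)\|^2,
\end{equation*}
where the cross term $-2(x - \overline{y})^\top \sum_{i=1}^N \widetilde{y}_i$ vanishes precisely because $\sum_{i=1}^N \widetilde{y}_i = 0$. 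Combining gives $f(x) \leq \widehat{f}_k + \widehat{g}_k^\top(x-\overline{y}) + \tfrac{NL}{2}\|x-\overline{y}\|^2 + \tfrac{L}{2}\|\widetilde{y}(k)\|^2$, which is at least as tight as the claimed bound with $L_k = 2NL$ and $\delta_k = L\|\widetilde{y}(k)\|^2$; the monotonicity in $L_y$ noted after Definition~\ref{def-inexact-oracle} (and the obvious monotonicity in $\delta_y$) then delivers exactly the stated constants.

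There is no genuine analytical obstacle here; the only thing to watch is the bookkeeping. Specifically, one must verify carefully that the $\overline{y}$-terms cancel in the first identity and that the cross term cancels in the mean-centering identity — both cancellations hinge on $\overline{y}$ being the exact average — since a sign slip or a stray factor would corrupt the constants $L_k$ and $\delta_k$. The fact that the derivation yields slightly sharper constants (a factor $\tfrac{1}{2}$ on each of $L_k$ and $\delta_k$) than those claimed is harmless and merely reflects slack built into the statement.
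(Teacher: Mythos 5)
Your proposal is correct and follows the same overall route as the paper's proof: convexity of each $f_i$ at $y_i(k)$ for the lower inequality, the descent lemma for the upper one, both summed over $i$ after writing $x-y_i=(x-\overline{y})+(\overline{y}-y_i)$. The one genuine difference is in how $\sum_{i=1}^N\|x-y_i\|^2$ is handled: the paper applies the crude bound $\|a+b\|^2\leq 2\|a\|^2+2\|b\|^2$ termwise, which directly produces $L_k=2NL$ and $\delta_k=L\|\widetilde{y}(k)\|^2$, whereas you use the exact mean-centering identity $\sum_i\|x-y_i\|^2=N\|x-\overline{y}\|^2+\|\widetilde{y}(k)\|^2$ (the cross term vanishing because $\sum_i\widetilde{y}_i=0$), which yields the sharper constants $L_k=NL$ and $\delta_k=\tfrac{L}{2}\|\widetilde{y}(k)\|^2$; you then correctly relax to the stated constants via the monotonicity of the oracle definition in $L_y$ and $\delta_y$. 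Your version is strictly tighter at no extra cost, and would in fact propagate a factor-of-two improvement into the downstream convergence constants, though it changes nothing about the rates; both arguments are complete and valid.
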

%
%
%
%
%
%
%
%
Lemma~\ref{lemma-relate-inexact-oracle-with-our-alg} implies that, if~$L_{k-1}^\prime = \frac{N k}{c} \geq 2 N L$, i.e.,
 if~$c \leq \frac{k}{2 L}$, then
 the progress per iteration in Lemma~\ref{lemma-progress-one-iteration}
 holds for~\eqref{eqn-our-alg-CM} with~$\delta_{k-1}:=L\|\widetilde{y}(k-1)\|^2$.
 If $c\leq 1/(2 L)$, Lemma~\ref{lemma-progress-one-iteration}
 applies for \emph{all iterations}~$k=1,2,...$; otherwise,
 it holds for all~$k\geq 2 c L$.
\begin{IEEEproof}[Proof of Lemma~\ref{lemma-relate-inexact-oracle-with-our-alg}]
For notation simplicity, we re-write $y(k)$ and $\overline{y}(k)$ as $y$ and $\overline{y}$, and $\widehat{f}_k, \widehat{g}_k, L_k, \delta_k$ as
$\widehat{f}_y, \widehat{g}_y, L_y, \delta_y$. In view of Definition~\ref{def-inexact-oracle}, we need to show inequalities \eqref{eqn-inexact-oracle-prop-1}. We first show the left one.
By convexity of $f_i(\cdot)$:
$
f_i(x) \geq f_i(y_i) + \nabla f_i(y_i)^\top (x-y_i), \:\:\: \forall x;
$
summing over $i=1,...,N$, using $f(x)=\sum_{i=1}^N f_i(x)$, and expressing $x-y_i=x-\overline{y}+\overline{y}-y_i$:
\begin{eqnarray*}
f(x) &\geq& \sum_{i=1}^N \left( f_i(y_i)+ \nabla f_i(y_i)^\top (\overline{y}-y_i)\right)
\\
&+& \left( \sum_{i=1}^N \nabla f_i(y_i)\right)^\top (x - \overline{y})
= \widehat{f}_y + \widehat{g}_y^\top (x - \overline{y}).
\end{eqnarray*}
We now prove the right inequality in~\eqref{eqn-inexact-oracle-prop-1}. As $f_i(\cdot)$
 is convex and has Lipschitz continuous derivative with constant $L$, we have:
$
 f_i(x) \leq f_i(y_i) + \nabla f_i(y_i)^\top (x-y_i) + \frac{L}{2}\|x-y_i\|^2,
$
 which, after summation over $i=1,...,N$, expressing
 $x-y_i=(x-\overline{y})+(\overline{y}-y_i)$, and using the inequality
  $\|x-y_i\|^2  = \|(x-\overline{y})+(\overline{y}-y_i)\|^2 \leq 2\|x-\overline{y}\|^2+ 2\|\overline{y}-y_i\|^2$, gives:
 {\allowdisplaybreaks{
\begin{eqnarray*}
 f(x) &\leq & \sum_{i=1}^N \left( f_i(y_i) + \nabla f_i(y_i)^\top (\overline{y}-y_i) \right) \\
  &+& \left( \sum_{i=1}^N \nabla f_i(y_i)\right)^\top (x - \overline{y}) \\
  &+&
  N L \|x-\overline{y}\|^2 + L \sum_{i=1}^N \|\overline{y}-y_i\|^2 \\
  &=& \widehat{f}_y + \widehat{g}_y^\top (x-\overline{y}) + \frac{2 N L}{2} \|x-\overline{y}\|^2 + \delta_y,
\end{eqnarray*}}}
and so~$(\widehat{f}_y,\widehat{g}_y)$ satisfy the right inequality in~\eqref{eqn-inexact-oracle-prop-1} with
$L_y=2 N L$ and $\delta_y=L \sum_{i=1}^N \|\overline{y}-y_i\|^2.$
\end{IEEEproof}

\textbf{Algorithm D--NC}. Consider algorithm D--NC in~\eqref{eqn-our-alg-cons}--\eqref{eqn-our-alg-cons-druga}.
To avoid notational clutter, use the same notation as with D--NG for the global averages:
 $\overline{x}(k):=\frac{1}{N}\sum_{i=1}^N x_i(k)$,
 and $\overline{y}(k):=\frac{1}{N}\sum_{i=1}^N y_i(k)$,
re-define~$\widehat{f}_k,\widehat{g}_k$ for D--NC as in~\eqref{eqn-f-k-g-k},
and let $L_{k-1}^\prime:=\frac{N}{\alpha}$.
 Multiplying~\eqref{eqn-our-alg-cons}--\eqref{eqn-our-alg-cons-druga} from the left by $(1/N)\mathbf{1}^\top \otimes I$, and using
  $(\mathbf{1}^\top \otimes I) (W \otimes I)=\mathbf{1}^\top \otimes I$, we get that
  $\overline{x}(k),\overline{y}(k)$ satisfy~\eqref{eqn-our-alg-CM}.
  As~$\alpha \leq 1/(2L)$, we have
$L_{k-1}^\prime \geq 2NL$, and so, by Lemma~\ref{lemma-relate-inexact-oracle-with-our-alg},
the progress per iteration in Lemma~\ref{lemma-progress-one-iteration}
 applies to~$\overline{x}(k),\overline{y}(k)$ of D--NC for all~$k$, with~$\delta_{k-1}=L\|\widetilde{y}(k-1)\|^2$.

In summary, the analysis of convergence rates of both D--NG and D--NC boils down to finding the disagreements~$\|\widetilde{y}(k)\|$ and
   then applying Lemma~\ref{lemma-progress-one-iteration}.

%
%
%
%
%
%

\vspace{-0mm}
%
%
%
%
%
%
%
%
%
\section{Algorithm D--NG: Convergence analysis}
\label{section-converg-rate-D-NG}
This Section studies the convergence of D--NG.
Subsection~\ref{subsection-D-NG-disagreement}
bounds the disagreements~$\|\widetilde{x}(k)\|$ and~$\|\widetilde{y}(k)\|$
 with D--NG; Subsection~\ref{subsection-D-NG-rate}
  combines these bounds with
  Lemma~\ref{lemma-progress-one-iteration} to derive the convergence rate of D--NG
   and its dependence on the underlying network.
\subsection{Algorithm D--NG: Disagreement estimate}
\label{subsection-D-NG-disagreement}
%
%
%
This subsection
shows that $\|\widetilde{x}(k)\|$ and $\|\widetilde{y}(k)\|$
 are $O(1/k)$, hence establishing asymptotic consensus --
the differences of the nodes' estimates
$x_i(k)$ (and $y_i(k)$) converge to zero.
 Recall the step-size constant $c>0$ in~\eqref{eqn-alpha-k} and the gradient bound~$G$ in Assumption~\ref{assumption-bdd-gradients}.
 %
 %
 %
 %
 %
\begin{theorem}[Consensus with D--NG]
\label{theorem-bound-distance-to-consensus}
For D--NG in~\eqref{eqn-our-alg-scalar}--\eqref{eqn-alpha-k} under Assumptions~\ref{assumption-network} and~\ref{assumption-bdd-gradients}:
%
%
%
\begin{eqnarray}
\|\widetilde{x}(k)\| &\leq& \sqrt{N}\,c\,{G}\,C_{\mathrm{cons}}\,\frac{1}{k} \\
\|\widetilde{y}(k)\| &\leq&  4\,\sqrt{N}\, c\,{G}\,C_{\mathrm{cons}}\,\frac{1}{k},\:k=1,2,..., \nonumber \\
\label{eqn-C-cons}
C_{\mathrm{cons}}\hspace{-2mm}& =&\hspace{-2mm}
\frac{8\left\{ 2\,\mathcal{B}\left(\sqrt{\mu(W)}\right) + \frac{7}{1-\mu(W)}\right\} }{\sqrt{{\eta}(1-\mu(W))}},
\end{eqnarray}
%
%
%
%
%
%
%
with $\mathcal{B}(r) := \sup_{z \geq 1/2} \left( z r^{z}\log(1+z)\right) \in (0,\infty),$ $r \in (0,1).$
%
\end{theorem}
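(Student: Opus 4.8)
The plan is to track the \emph{disagreement} directly by projecting the D--NG recursion onto the complement of the consensus subspace. Introduce the projector $P=(I-J)\otimes I$ and note that, since $W$ is doubly stochastic, $(I-J)W=\widetilde{W}$ and $\widetilde{W}J=0$; hence applying $P$ to \eqref{eqn-our-alg}--\eqref{eqn-our-alg-druga} annihilates the average part of every iterate and yields the closed recursion
\begin{eqnarray*}
\widetilde{x}(k) &=& (\widetilde{W}\otimes I)\,\widetilde{y}(k-1) + u(k), \qquad u(k):=-\alpha_{k-1}\,P\,\nabla F(y(k-1)), \\
\widetilde{y}(k) &=& (1+\beta_{k-1})\,\widetilde{x}(k) - \beta_{k-1}\,\widetilde{x}(k-1).
\end{eqnarray*}
Assumption~\ref{assumption-bdd-gradients} together with $\|P\|\le 1$ gives the crucial input bound $\|u(k)\|\le\alpha_{k-1}\sqrt{N}\,G=\frac{c\sqrt{N}\,G}{k}$, while the equal initialization makes $\widetilde{x}(0)=\widetilde{y}(0)=0$. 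Eliminating $\widetilde{y}$ turns this into a forced second-order \emph{linear time-varying} recursion in $\widetilde{x}$ driven by the $O(1/k)$ input $u$; as everything in sight is a polynomial in $\widetilde{W}$, the analysis decouples over the eigenbasis of $\widetilde{W}$.

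Next I would diagonalize $\widetilde{W}=Q\widetilde{\Lambda}Q^\top$ and project onto each eigenvector $q_j$. The consensus direction ($\lambda=0$) contributes nothing, and for every other eigenvalue $\lambda=\lambda_j$ the recursion collapses to a scalar companion form $z_\lambda(k)=A_k(\lambda)\,z_\lambda(k-1)+(\widehat{u}(k),0)^\top$ with
\[
A_k(\lambda)=\begin{pmatrix} \lambda(1+\beta_{k-2}) & -\lambda\beta_{k-2} \\ 1 & 0 \end{pmatrix}.
\]
Here Assumption~\ref{assumption-network}(b) enters decisively: $W\succeq\eta I$ forces all nonzero eigenvalues of $\widetilde{W}$ to be \emph{positive} and to lie in $[\eta,\mu(W)]$. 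The characteristic roots of $A_k(\lambda)$ multiply to $\lambda\beta_{k-2}$, and since $\beta_k\to 1$ the discriminant $\lambda^2(1+\beta_{k-2})^2-4\lambda\beta_{k-2}$ is negative for all but finitely many $k$ on this range, so the roots are complex conjugates of modulus $\sqrt{\lambda\beta_{k-2}}\le\sqrt{\mu(W)}$. This is exactly the contraction rate $\sqrt{\mu(W)}$ that appears through $\mathcal{B}(\sqrt{\mu(W)})$ in $C_{\mathrm{cons}}$.

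The core of the proof --- and the step I expect to be the main obstacle --- is bounding the state-transition product $\Phi_\lambda(k,\ell)=A_k(\lambda)\cdots A_{\ell+1}(\lambda)$. The matrices $A_k(\lambda)$ are \emph{not} contractive in any fixed norm (their top-left entry approaches $2\lambda$, which may exceed $1$ when $\mu(W)>1/2$) and, being nearly companion matrices, they are non-normal; one must pass to the $\lambda$-dependent eigenbasis that diagonalizes the frozen-coefficient approximation and then control how the conditioning of that change of basis degrades. The conditioning worsens like $1/\sqrt{\lambda}$ as the roots approach the real axis and like $1/(1-\mu(W))$ as $\lambda\to\mu(W)$ (discriminant $\to0$), which is precisely where the factors $1/\sqrt{\eta}$ (via $\lambda\ge\eta$) and $1/(1-\mu(W))$ in $C_{\mathrm{cons}}$ originate. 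Carrying this out yields a bound of the shape $\|\Phi_\lambda(k,\ell)\|\lesssim\frac{1}{\sqrt{\eta}}\,(k-\ell)\,(\sqrt{\mu(W)})^{\,k-\ell}\log(\,\cdot\,)$, whose supremum over the horizon is captured by the constant $\mathcal{B}(\sqrt{\mu(W)})=\sup_{z\ge 1/2}\bigl(z\,r^{z}\log(1+z)\bigr)$ with $r=\sqrt{\mu(W)}$.

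Finally, I would assemble the estimate. Writing $\widetilde{x}(k)=\sum_{\ell=1}^{k}\Psi(k,\ell)\,u(\ell)$ and using the orthonormality of $Q$ to recombine eigencomponents, the bound reduces to controlling $\sum_{\ell=1}^{k}\|\Phi_\lambda(k,\ell)\|\,\frac{c\sqrt{N}G}{\ell}$ uniformly over $\lambda\in[\eta,\mu(W)]$. Because the transition decays geometrically in $k-\ell$, the sum is dominated by $\ell$ close to $k$, where $1/\ell\approx 1/k$; summing the geometric tail contributes the extra $1/(1-\mu(W))$ factor and the log-weighted peak contributes $\mathcal{B}(\sqrt{\mu(W)})$, producing the claimed $\|\widetilde{x}(k)\|\le\sqrt{N}\,c\,G\,C_{\mathrm{cons}}/k$. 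The bound on $\|\widetilde{y}(k)\|$ then follows immediately from the second relation, since $\|\widetilde{y}(k)\|\le(1+\beta_{k-1})\|\widetilde{x}(k)\|+\beta_{k-1}\|\widetilde{x}(k-1)\|\le 2\|\widetilde{x}(k)\|+\|\widetilde{x}(k-1)\|$ and $1/(k-1)\le 2/k$ for $k\ge 2$, which is absorbed into the stated factor $4$.
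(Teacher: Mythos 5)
Your architecture is the same as the paper's: project onto the disagreement subspace to get a forced second-order linear time-varying recursion in $(\widetilde{x}(k),\widetilde{x}(k-1))$, diagonalize $\widetilde{W}$ to decouple into $2\times 2$ companion blocks, bound the state-transition products, and convolve with the $O(1/\ell)$ inputs. The gap is at the step you yourself flag as the main obstacle: you assert, without proof, that the product $\Phi_\lambda(k,\ell)=A_k(\lambda)\cdots A_{\ell+1}(\lambda)$ obeys a bound of the shape $\frac{1}{\sqrt{\eta}}(k-\ell)(\sqrt{\mu(W)})^{k-\ell}\log(\cdot)$. Nothing in your sketch delivers this. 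The spectral radii of the individual (non-normal, non-commuting) factors control nothing about the product, and the generic route you gesture at --- conjugate by the frozen eigenbasis and ``control how the conditioning degrades'' --- does not obviously produce a factor that is merely polynomial in $k-\ell$: the conjugated factors are $\mathcal{D}_i(t)=\widehat{\mathcal{D}}_i-\tfrac{3}{t+3}\widehat{\mathcal{Q}}_i^{-1}\Delta_i\widehat{\mathcal{Q}}_i$, and a naive bound $\|\mathcal{D}_i(t)\|\le\sqrt{\lambda_i}\,(1+O(1/t))$ yields a correction $\prod_t(1+O(1/t))$ that grows like a power of the \emph{ratio} $k/\ell$ (unbounded for small $\ell$), not like $k-\ell$. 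The paper closes exactly this hole by an exact computation: it verifies that $\|\mathcal{D}_i(t)^H\mathcal{D}_i(t)\|_1=\lambda_i$ for every admissible coefficient $a_t=3/(t+3)\in[0,2]$, hence $\|\mathcal{D}_i(t)\|\le\sqrt{\lambda_i}$ for \emph{all} $t$, so the product contracts purely geometrically, $\|\Phi(k,\ell)\|\le\frac{8}{\sqrt{\eta(1-\mu(W))}}(\sqrt{\mu(W)})^{k-\ell}$, with no polynomial or logarithmic factor. That identity is the heart of the proof and is absent from your argument.

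A consequence of this is that you also misplace the origin of $\mathcal{B}(\sqrt{\mu(W)})$: it does not come from the transition-matrix bound but from the final convolution $\sum_{t=0}^{k-1}r^{k-t-1}/(t+1)$ of the purely geometric kernel with the harmonic step sizes, which the paper bounds by $\bigl(2\mathcal{B}(r)+\tfrac{7}{1-r^2}\bigr)\tfrac1k$ after splitting the sum at $t=\lceil k/2\rceil$. Even if your asserted transition bound were established, the resulting constant would not be the stated $C_{\mathrm{cons}}$, so the theorem as written (with its explicit constant) would not follow. Your reductions before and after this step --- the input bound $\|u(k)\|\le c\sqrt{N}G/k$, the vanishing of the $\lambda=0$ block, and the passage from $\|\widetilde{x}(k)\|$ to $\|\widetilde{y}(k)\|$ via $\|\widetilde{y}(k)\|\le 2\|\widetilde{x}(k)\|+\|\widetilde{x}(k-1)\|$ --- all match the paper and are fine.
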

For notational simplicity, we prove Theorem~\ref{theorem-bound-distance-to-consensus} for $d=1$,
but the proof extends to a generic $d>1.$ We model the dynamics of
the augmented state~$(\widetilde{x}(k)^\top, \widetilde{x}(k-1)^\top)^\top$
 as a linear time varying system
 with inputs~$(I-J)\nabla F(y(k))$.
 We present here the linear system and solve it in the Appendix.
 Substitute the expression for $y(k-1)$ in~\eqref{eqn-our-alg}; multiply the resulting equation from the left by $(I-J)$; use $(I-J)W=\widetilde{W}=\widetilde{W}(I-J)$;
 and set $\widetilde{x}(0)=0$~by~assumption. We~obtain:
 %
{
\small{
\begin{eqnarray}
\left[ \begin{array}{cc} \widetilde{x}(k) \\
\widetilde{x}(k-1) \end{array} \right] \hspace{-2mm}&=&\hspace{-2mm}
\left[ \begin{array}{cc}
(1+\beta_{k-2}) \widetilde{W} & -\beta_{k-2} \widetilde{W} \nonumber \\
I & 0
 \end{array} \right]
 \, \left[ \begin{array}{cc} \widetilde{x}(k-1) \\
\widetilde{x}(k-2) \end{array} \right] \\
 &-& \hspace{-2mm}
 \label{eqn-recursion}
 \alpha_{k-1} \left[ \begin{array}{cc} (I-J)\nabla F(y(k-1))\\
0 \end{array} \right] ,
\end{eqnarray}}}
for all $k=1,2,...$, where $\beta_k$, for $k=0,1,...$, is in~\eqref{eqn-alpha-k}, $\beta_{-1}=0,$
and $(\widetilde{x}(0)^\top,\widetilde{x}(-1)^\top)^\top=0$.
We emphasize that system~\eqref{eqn-recursion}
is more complex than the corresponding
systems in, e.g.,~\cite{nedic_T-AC,duchi}, which
involve only a single state $\widetilde{x}(k)$;
the upper bound on $\|\widetilde{x}(k)\|$ from~\eqref{eqn-recursion}
is an important technical contribution of this paper; see Theorem~\ref{theorem-bound-distance-to-consensus}
 and Appendix~A.

\subsection{Convergence rate and network scaling}
\label{subsection-D-NG-rate}
%
%
%
%
%
%
%
%
%
%
%
 Theorem~\ref{theorem-basic-convergence-result}~(a)
states the $O\left( \log k/k\right)$ convergence
rate result for D--NG when the step-size constant
$c\leq 1/(2L)$; Theorem~\ref{theorem-basic-convergence-result}~(b) (proved in~\cite{arxivVersion})
 demonstrates that the $O\left( \log k/k\right)$
 convergence rate still holds if $c>1/(2L)$,
 with a deterioration in the convergence constant. Part~(b) assumes
 $x_i(0)=y_i(0)=0$, $\forall i$, to avoid notational clutter.
\begin{theorem}
\label{theorem-basic-convergence-result}
Consider D--NG under Assumptions~\ref{assumption-network}--\ref{assumption-bdd-gradients}.
 Let $\|\overline{x}(0)-{{x^\star}}\| \leq R$, $R \geq 0$. Then:
\begin{enumerate}[(a)]
\item If $c \leq 1/(2L)$, we have, $\forall i$,~$\forall k=1,2,...$:
\begin{eqnarray}
&\,&\frac{f(x_i(k))-f^\star}{N}
\leq  \frac{2\, R^2}{c}  \left(\frac{1}{k}\right)
 + 16\,c^2\, \,L \,C_{\mathrm{cons}}^2 \,G^2 \nonumber \\
 &\times& \left( \frac{1}{k} \sum_{t=1}^{k-1} \frac{(t+2)^2}{(t+1)t^2} \right)+
\, c\, \sqrt{N}\,G^2  C_{\mathrm{cons}}\, \left(\frac{1}{k}\right)  \nonumber \\
&\leq& \hspace{-.5mm}
\mathcal{C}\, \left( \frac{1}{k}\sum_{t=1}^{k} \frac{(t+2)^2}{(t+1)t^2}\right) \nonumber \\
\mathcal{C} &=&  \frac{2R^2}{c} + 16 c^2 L C_{\mathrm{cons}}^2 G^2 +  \,c \,\sqrt{N}\,G^2 C_{\mathrm{cons}}.
\label{eqn-constant-scaling}
 \end{eqnarray}
%
%
\item Let $x_i(0)=y_i(0)=0,\forall i.$
If $c > 1/(2L)$,~\eqref{eqn-constant-scaling} holds $\forall i$,
$\forall k \geq 2 \,c L$,
with $\mathcal{C}$ replaced with
$\mathcal{C}^\prime  = \mathcal{C}^{\prime \prime}(L,G,R,c) + 16 c^2 L C_{\mathrm{cons}}^2 G^2 +  \,c \,\sqrt{N}\,G^2 C_{\mathrm{cons}}$, and
$\mathcal{C}^{\prime \prime}(L,G,R,c) \in [0,\infty)$
is a constant that depends on $L,G,R,c$, and \emph{is independent of}
 $N$~and~$W$.
 \end{enumerate}
\end{theorem}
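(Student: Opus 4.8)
The plan is to prove part (a) (the case $c\le 1/(2L)$); part (b) follows from the identical argument once the finitely many initial iterations $k<2cL$, for which Lemma~\ref{lemma-progress-one-iteration} does not yet apply, are absorbed into the constant $\mathcal{C}^{\prime\prime}$. Throughout I would fix $x^\bullet=x^\star$ and abbreviate $A_k:=f(\overline{x}(k))-f^\star\ge 0$ and $B_k:=\|\overline{v}(k)-x^\star\|^2\ge 0$. First I would peel off the node-versus-average gap by writing $f(x_i(k))-f^\star=\bigl(f(x_i(k))-f(\overline{x}(k))\bigr)+\bigl(f(\overline{x}(k))-f^\star\bigr)$. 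Since each $f_j$ has $\|\nabla f_j\|\le G$, a mean-value estimate gives $f(x_i(k))-f(\overline{x}(k))\le NG\,\|x_i(k)-\overline{x}(k)\|\le NG\,\|\widetilde{x}(k)\|$, and Theorem~\ref{theorem-bound-distance-to-consensus} bounds $\|\widetilde{x}(k)\|\le \sqrt{N}\,c\,G\,C_{\mathrm{cons}}/k$; dividing by $N$ yields exactly the third term $c\,G^2 C_{\mathrm{cons}}/k$ of \eqref{eqn-constant-scaling}. The remaining task is to bound $A_K/N$.

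Next I would place the global averages in the inexact-oracle framework. By Lemma~\ref{lemma-relate-inexact-oracle-with-our-alg}, $(\widehat{f}_k,\widehat{g}_k)$ is a $(2NL,\,L\|\widetilde{y}(k)\|^2)$ oracle of $f$ at $\overline{y}(k)$; since any such pair is also an oracle with a larger Lipschitz constant, and since $c\le 1/(2L)$ forces $L_{k-1}':=Nk/c\ge 2NL$ for every $k\ge 1$, the recursion~\eqref{eqn-our-alg-CM} is precisely the update of Lemma~\ref{lemma-progress-one-iteration} with $L_{k-1}=Nk/c$ and $\delta_{k-1}=L\|\widetilde{y}(k-1)\|^2$. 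That lemma then gives, for every $k\ge 1$,
\[
(k+1)^2 A_k + \tfrac{2Nk}{c}B_k \;\le\; (k^2-1)A_{k-1} + \tfrac{2Nk}{c}B_{k-1} + (k+1)^2\delta_{k-1}.
\]

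The main obstacle, and the single place where the argument departs from the textbook constant-$L$ Nesterov telescoping, is that $L_{k-1}=Nk/c$ varies with $k$, so the two momentum terms $B_k,B_{k-1}$ carry the same coefficient $2Nk/c$ and do not telescope against the next iterate. I would resolve this by dividing the displayed inequality by $L_{k-1}=Nk/c$, which rescales both momentum terms to the common coefficient $2$. Setting $\phi_k:=\tfrac{(k+1)^2 c}{Nk}A_k$, the inequality becomes $\phi_k+2B_k\le \tfrac{(k^2-1)c}{Nk}A_{k-1}+2B_{k-1}+\tfrac{(k+1)^2 c}{Nk}\delta_{k-1}$. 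The elementary bound $\tfrac{k^2-1}{k}\le \tfrac{k^2}{k-1}$ (equivalent to $k^2+k-1\ge 0$, valid for $k\ge 2$) together with $A_{k-1}\ge 0$ gives $\tfrac{(k^2-1)c}{Nk}A_{k-1}\le \phi_{k-1}$, while at $k=1$ the coefficient $k^2-1$ vanishes so no $\phi_0$ is needed. Telescoping from $k=1$ to $K$ and discarding $2B_K\ge 0$ then yields the clean recursion $\phi_K\le 2B_0+\sum_{k=1}^{K}\tfrac{(k+1)^2 c}{Nk}\delta_{k-1}$.

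Finally I would substitute the initial and disagreement data. Because $\gamma_0=1$ and $x(0)=y(0)$ we have $\overline{v}(0)=\overline{x}(0)$, so $B_0=\|\overline{x}(0)-x^\star\|^2\le R^2$; the equal initialization gives $\widetilde{y}(0)=0$, hence $\delta_0=0$; and Theorem~\ref{theorem-bound-distance-to-consensus} gives $\delta_{k-1}=L\|\widetilde{y}(k-1)\|^2\le 16NLc^2G^2C_{\mathrm{cons}}^2/(k-1)^2$ for $k\ge 2$. Solving $\phi_K\le\cdots$ for $A_K$, using $\tfrac{K}{(K+1)^2}\le\tfrac1K$, and re-indexing $t=k-1$ turns the two surviving pieces, after dividing by $N$, into $\tfrac{2R^2}{c}\tfrac1K$ and $16c^2 L C_{\mathrm{cons}}^2 G^2\,\tfrac1K\sum_{t=1}^{K-1}\tfrac{(t+2)^2}{(t+1)t^2}$ — exactly the first two terms of \eqref{eqn-constant-scaling}. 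Adding the node-gap term from the first step and bounding every bare $\tfrac1K$ by $\tfrac1K\sum_{t=1}^{K}\tfrac{(t+2)^2}{(t+1)t^2}$ (legitimate since that sum exceeds its $t=1$ entry $9/2>1$) collapses the three coefficients into the single constant $\mathcal{C}=\tfrac{2R^2}{c}+16c^2 L C_{\mathrm{cons}}^2 G^2+cG^2 C_{\mathrm{cons}}$, completing the proof.
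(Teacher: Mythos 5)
Your proof of part (a) follows the paper's own argument essentially step for step: cast the averages in the inexact-oracle framework via Lemma~\ref{lemma-relate-inexact-oracle-with-our-alg}, apply Lemma~\ref{lemma-progress-one-iteration} with $L_{k-1}=Nk/c$, normalize so that the momentum terms carry a $k$-independent coefficient, weaken one of the two function-value coefficients so the recursion telescopes, and feed in the disagreement bound of Theorem~\ref{theorem-bound-distance-to-consensus}. Your inequality $\frac{k^2-1}{k}\le\frac{k^2}{k-1}$ applied on the right-hand side is the mirror image of the paper's $\frac{(k+1)^2}{k}\ge\frac{(k+1)^2-1}{k+1}$ applied on the left; both are valid and give the same telescoped bound, and your bookkeeping for the $R^2$ term and the $16c^2LC_{\mathrm{cons}}^2G^2$ term (including $\delta_0=0$ and the re-indexing $t=k-1$) checks out.

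Two points need tightening. First, in the node-versus-average step your stated bound $f(x_i(k))-f(\overline{x}(k))\le NG\|\widetilde{x}(k)\|$, combined with $\|\widetilde{x}(k)\|\le\sqrt{N}cGC_{\mathrm{cons}}/k$ and division by $N$, yields $\sqrt{N}\,cG^2C_{\mathrm{cons}}/k$, not the $N$-free third term $cG^2C_{\mathrm{cons}}/k$ that you claim to recover; the paper's \eqref{eqn-relate-gaps} uses the intermediate bound $G\sqrt{N}\|\widetilde{x}(k)\|$ at this point, so your constant is off by a factor $\sqrt{N}$ relative to what is asserted, and since this factor grows with the network it matters for the scaling results downstream. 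Second, part (b) is not merely a matter of ``absorbing'' the first $2cL$ iterations into a constant: to obtain a $\mathcal{C}^{\prime\prime}$ that is independent of $N$ and $\mu(W)$ one must actually control $f(\overline{x}(k^\prime-1))-f^\star$ and $\|\overline{v}(k^\prime-1)\|$ at $k^\prime=2cL$, which the paper does through a separate recursion $M_k\le 3M_{k-1}+2cG/k$ on $\max\{\|\overline{x}(k)\|,\|\overline{y}(k)\|\}$, producing constants of order $3^{k^\prime}$ that depend only on $L,G,R,c$. That argument is routine but it is a required step, not a remark.
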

We prove here Theorem~\ref{theorem-basic-convergence-result}~(a); for part~(b), see~\cite{arxivVersion}.
\begin{IEEEproof}[Proof of Theorem~\ref{theorem-basic-convergence-result}~(a)] The proof consists of two parts.
In the Step~1 of the proof, we estimate the optimality gap $\frac{1}{N}(f(\overline{x}(k))-f^\star)$ at the point $\overline{x}(k)=
\frac{1}{N}\sum_{i=1}^N x_i(k)$ using
Lemma~\ref{lemma-progress-one-iteration} and the inexact oracle machinery. In the Step~2, we
 estimate the optimality gap $\frac{1}{N}(f(x_i(k))-f^\star)$ at any node $i$ using convexity of the $f_i$'s
 and the bound on $\|\widetilde{x}(k)\|$ from Theorem~\ref{theorem-bound-distance-to-consensus}.

\textbf{Step 1. Optimality gap $(f(\overline{x}(k))-f^\star)$.} Recall that, for $k=1,2,...,$ $(\widehat{f}_k, \widehat{g}_k)$ in \eqref{eqn-f-k-g-k}
 is a $(L_k,\delta_k)$ inexact oracle of $f$ at point $\overline{y}(k)$ with
  $L_k=2 N L$ and $\delta_k = L \|\widetilde{y}(k)\|^2$. Note that $(\widehat{f}_k, \widehat{g}_k)$
  is also a $(L^\prime_k,\delta_k)$ inexact oracle of $f$ at point $\overline{y}(k)$ with
  $L^\prime_k= N \frac{1}{c} (k+1) = \frac{N}{\alpha_k}$, because $\frac{1}{c} \geq 2 L$, and so $L^\prime_k \geq L_k$.
  Now, we apply Lemma~\ref{lemma-progress-one-iteration}
  to~\eqref{eqn-our-alg-CM}, with
${x^\bullet} = {{x^\star}}$, and the Lipschitz constant $L_k^\prime = 1/(\alpha_k/N).$
   Recall that $\overline{v}(k) = \frac{\overline{y}(k) - (1-\gamma_{k})\overline{x}(k)}{\gamma_{k}}.$~We~get:
  {
  \small{
\begin{eqnarray}
\label{eqn-progress-eqn-proof}
&\,&\frac{(k+1)^2}{k} \left( f(\overline{x}(k)) - f^\star \right) + \frac{2 N}{c} \|\overline{v}(k)-{{x^\star}}\|^2 \\
&\leq&
\frac{k^2-1}{k} \left( f(\overline{x}(k-1)) - f^\star \right) + \frac{2 N}{c}  \|\overline{v}(k-1)-{{x^\star}}\|^2 \nonumber \\
&+& L \|\widetilde{y}(k-1)\|^2 \frac{(k+1)^2}{k}. \nonumber
\end{eqnarray}}}
%
%
Because
 $\frac{(k+1)^2}{k}
 \geq \frac{(k+1)^2-1}{k+1}$, and
 $ \left( f(\overline{x}(k)) - f^\star \right) \geq 0$, we have:
 {
 \small{
\begin{eqnarray*}
&\,&\frac{(k+1)^2-1}{k+1} \left( f(\overline{x}(k)) - f^\star \right) + \frac{2 N}{c}  \|\overline{v}(k)-{{x^\star}}\|^2 \\
&\leq&
\frac{k^2-1}{k} \left( f(\overline{x}(k-1)) - f^\star \right) + \frac{2 N}{c}  \|\overline{v}(k-1)-{{x^\star}}\|^2 \\
&+& L \|\widetilde{y}(k-1)\|^2 \frac{(k+1)^2}{k}.
\end{eqnarray*}}}
By unwinding the above recursion, and using $\overline{v}(0)=\overline{x}(0)$, gives:
$
 \frac{(k+1)^2-1}{k+1} \left( f(\overline{x}(k)) - f^\star \right) \leq
 \frac{2 N}{c}  \|\overline{x}(0)-{{x^\star}}\|^2
+ L \sum_{t=1}^k \|\widetilde{y}(t-1)\|^2 \frac{(t+1)^2}{t}.
$
Applying Theorem~\ref{theorem-bound-distance-to-consensus} to the last equation,
and using $\frac{k+1}{(k+1)^2-1}=\frac{k+1}{k(k+2)} \leq \frac{k+2}{k(k+2)} = \frac{1}{k}$,
and the assumption $\|\widetilde{y}(0)\|=0$, leads to, as desired:
\begin{eqnarray}
\label{eqn-opt-gap-overline-x}
 \left( f(\overline{x}(k)) - f^\star \right) &\leq&
 \frac{1}{k} \frac{2 N}{c}  \|\overline{x}(0)-{{x^\star}}\|^2 \\
&+& \frac{16\,c^2\,N}{k} L\, C_{\mathrm{cons}}^2 G^2 \,\sum_{t=2}^k  \frac{(t+1)^2}{t (t-1)^2}. \nonumber
\end{eqnarray}

\textbf{Step 2. Optimality gap $(f(x_i(k))-f^\star)$}. Fix an arbitrary node $i$; then, by convexity of $f_j$, $j=1,2,...,N$:
$
f_j(\overline{x}(k)) \geq f_j(x_i(k)) + \nabla f_j(x_i(k))^\top (\overline{x}(k)-x_i(k)),
$
and so:
$
f_j(x_i(k)) \leq f_j(\overline{x}(k)) + G \|\overline{x}(k)-x_i(k)\|.
$
Summing the inequalities for $j=1,...,N$, using $\|\overline{x}(k)-x_i(k)\| \leq \|\widetilde{x}(k)\|$,
 subtracting $f^\star$ from both sides, from Theorem~\ref{theorem-bound-distance-to-consensus}:
\begin{eqnarray}
\label{eqn-relate-gaps}
f(x_i(k)) - f^\star &\leq& f(\overline{x}(k)) - f^\star + G N \|\widetilde{x}(k)\| \\
&\leq& f(\overline{x}(k)) - f^\star +c\, N \sqrt{N} \, C_{\mathrm{cons}} G^2 \frac{1}{k}, \nonumber
\end{eqnarray}
which, with~\eqref{eqn-opt-gap-overline-x}~where the summation variable $t$
is replaced by $t+1$, completes~the~proof.
\end{IEEEproof}

%
%
%
%

\textbf{Network Scaling}. 
  Using Theorem~\ref{theorem-basic-convergence-result},
  Theorem~\ref{theorem-scaling} studies the dependence of the convergence rate on
  the underlying network --
  $N$ and $W$, when: 1) nodes do not know $L$ and $\mu(W)$ before the algorithm run, and they
  set the step-size constant $c$ to a constant independent of $N,L,W$, e.g., $c=1$; and 2) nodes
  know $L,\mu(W)$, and they set $c=\frac{1-\mu(W)}{2 L}.$ See~\cite{duchi} for dependence of~$1/(1-\mu(W))$ on $N$ for commonly used models, e.g., expanders or geometric graphs.
%
 %
 %
 %
\begin{theorem}
\label{theorem-scaling}
Consider the algorithm D--NG in~\eqref{eqn-our-alg-scalar}--\eqref{eqn-alpha-k} under Assumptions~\ref{assumption-network}--\ref{assumption-bdd-gradients}.~Then,
 $\frac{1}{N}\left(f(x_i(k))-f^\star \right)$ is:
\[
O\left( \frac{1}{(1-\mu)^{p+\xi}}\left[
\frac{\log k}{k}+\frac{N^{1/2}\log^{1/2} k}{k^{3/2}}+\frac{N}{k^2}\right]\right),\]
where:
 (a)~$p=3$ for arbitrary $c=\mathrm{const}>0$; and (b)~$p=1$ for $c=\frac{1-\mu(W)}{2 L}$.
%
\end{theorem}

 \begin{IEEEproof}[Proof of Theorem~\ref{theorem-scaling}]
  Fix ${\eta} \in (0,1)$ and $\xi \in (0,1)$ (two arbitrarily small positive constants).
  By~Assumption~\ref{assumption-network}~(b), $\mu=\mu(W) \in [{\eta},1]$.
   We show that for $C_{\mathrm{cons}}$ in~\eqref{eqn-C-cons}:
   \begin{equation}
   \label{eqn-c-cons-proof}
   C_{\mathrm{cons}} \leq A(\xi,{\eta})\,\,\frac{1}{(1-\mu)^{3/2+\xi}},\:\:\forall \mu \in [{\eta},1],
      \end{equation}
where $A(\xi,{\eta}) \in (0,\infty)$
 depends only on $\xi,{\eta}$.
Consider $\mathcal{B}(r) = \sup_{z\geq 1/2} \left\{ z \,r^z \log(1+z) \right\},$ $r \in (0,1);$
there exists~$K_B(\xi) \in (0,\infty)$ such that:
 $ \log(1+z) \leq K_B(\xi) z^{\xi}$, $\forall z \geq 1/2.$
 Thus:
 \begin{eqnarray*}
 &\,&\mathcal{B}(r) \leq K_B(\xi) \, \sup_{z \geq 1/2} \left\{  z^{1+\xi} r^z  \right\} \\
                &=&
                 \frac{K_{B}(\xi) \,e^{-(1+\xi)} (1+\xi)^{(1+\xi)}}{(-\log r)^{1+\xi}}
                 =:\frac{A^\prime(\xi)}{(-\log r)^{1+\xi}},
 \end{eqnarray*}
for all $r\in(0,1)$.
From the above equation, and using $1/(-\log \sqrt{u}) \leq 2/(1-u)$, $\forall u \in [0,1)$,
 we have $\mathcal{B}\left(\sqrt{\mu}\right) \leq 2 A^\prime(\xi)/(1-\mu)^{1+\xi}$.
 The latter, applied to~\eqref{eqn-C-cons}, yields~\eqref{eqn-c-cons-proof}, with
 $A(\xi,{\eta}):=\frac{8}{\sqrt{{\eta}}} \max\left\{3 A^\prime(\xi),7 \right\}.$

A scaling result $O\left(\frac{N^{1/2}}{(1-\mu)^{p+\xi}}\frac{\log k}{k}\right)$, $p=3,1$, readily follows by substitution of~\eqref{eqn-c-cons-proof} in
Theorem~\ref{theorem-basic-convergence-result}~(a) and~(b), respectively.
 To prove Theorem~6,
 we modify the argument of~\eqref{eqn-relate-gaps}.
 We first prove claim~(b).
 Namely, at any node~$i$, using Lipschitz
 continuity of $\nabla f$ (with constant $N L$),
  $ f(x_i(k)) \leq f(\overline{x}(k)) + \nabla f(\overline{x}(k))^\top (x_i(k)-\overline{x}(k))
 +\frac{N L}{2}\|x_i(k)-\overline{x}(k)\|^2$, and thus:
 \begin{eqnarray}
 \label{eqn-nova-jednacina}
f(x_i(k)) \hspace{-1mm}\leq \hspace{-1mm}
 f(\overline{x}(k))\hspace{-1mm} + \hspace{-1mm}\|\nabla f(\overline{x}(k))\|  \|\widetilde{x}(k)\|
 +\hspace{-1mm}\frac{N L \|\widetilde{x}(k)\|^2}{2},
 \end{eqnarray}
 where we use $\|x_i(k)-\overline{x}(k)\|\leq \|\widetilde{x}(k)\|$.
 From~\eqref{eqn-opt-gap-overline-x},
 $f(\overline{x}(k))-f^\star=O\left(\frac{N}{c k}
 +\frac{N c^2 C_{\mathrm{cons}}^2 \log k}{k} \right)$.
  Using again Lipschitz
 continuity of~$\nabla f$ (with constant~$N L$):
 $\|\nabla f(\overline{x}(k))\| \leq \sqrt{2 N L}\sqrt{f(\overline{x}(k))-f^\star}
 = O\left(\frac{N}{\sqrt{c k}}
 +\frac{N c C_{\mathrm{cons}} \log^{1/2} k}{\sqrt{k}} \right)$.
 Consider~\eqref{eqn-nova-jednacina}. Subtracting $f^\star$ from both sides,
 dividing by $N$, and substituting the above bound on $\|\nabla f(\overline{x}(k))\|$ while using Theorem~5~(a),
 we obtain:
 \begin{eqnarray}
 \label{eqn-nova-nova}
 &\,&\frac{f(x_i(k))-f^\star}{N} =
 O(\frac{1}{c\,k}+\frac{c^2 C_{\mathrm{cons}}^2 \log k}{k}\\
& + &
 \left(\frac{1}{\sqrt{c\,k}}+\frac{c C_{\mathrm{cons}} \log^{1/2} k}{\sqrt{k}}  \right)
 \frac{ \sqrt{N} c C_{\mathrm{cons}}}{k} +\frac{N c^2 C_{\mathrm{cons}}^2}{k^2} ). \nonumber
 \end{eqnarray}
  We now apply~\eqref{eqn-c-cons-proof} to~\eqref{eqn-nova-nova}. Claim~(b) is proved after setting~$c=(1-\mu)/2L$.
   The proof for claim~(a) is completely analogous; the argument only replaces the term $\frac{1}{k} \frac{2 N}{c} R^2$ in~\eqref{eqn-opt-gap-overline-x} with $\frac{1}{k}\mathcal{C}^{\prime \prime}(L,G,R,C)$, see also~\cite{arxivVersion},
   and sets $c=\Theta(1)$.
  %
 %
 %
 \end{IEEEproof}
\vspace{-0mm}
\section{Algorithm D--NC: Convergence Analysis}
\label{section-converg-rate-D-NC}
%
We now consider the D--NC algorithm.
Subsection~\ref{subsection-D-NC-disagreement}
provides the disagreement estimate, while
Subsection~\ref{subsection-D-NC-disagreement} gives the convergence rate and network scaling.
\vspace{-0mm}
\subsection{Disagreement estimate}
\label{subsection-D-NC-disagreement}
%
%
%
%
%

%
%
%
%
%
%
%
%

%
%
%
We estimate the disagreements
$\widetilde{x}(k)$,
and $\widetilde{y}(k)$
 with~D--NC.
 %
 %
 %
 %
 \begin{theorem}[Consensus with D--NC]
 \label{theorem-distance-cons-consensus-alg}
 Let Assumptions~\ref{assumption-network}~(a) and~\ref{assumption-bdd-gradients} hold, and consider
 the algorithm D--NC.
  Then,
   for $k=1,2,...$:
  $
  \|\widetilde{x}(k)\| \leq 2 \alpha \sqrt{N} {G} \frac{1}{k^2}$, and
  $\|\widetilde{y}(k)\| \leq 2 \alpha \sqrt{N} {G} \frac{1}{k^2}.$
 \end{theorem}
%
%
%
%
%
%
\begin{IEEEproof} For notational simplicity, we perform the proof for $d=1$,
but it extends to a generic~$d>1$.
Denote by $B_{t-1}:=\max \left\{  \|\widetilde{x}(t-1)\|,\,\,\|\widetilde{y}(t-1)\|\right\}$,
and fix $t-1$. We want to upper bound~$B_t$.
Multiplying~\eqref{eqn-our-alg-cons}--\eqref{eqn-our-alg-cons-druga}
by $(I-J)$ from the left, using~$(I-J)W =\widetilde{W}(I-J)$:
\vspace{0mm}
\begin{eqnarray}
\label{eqn-our-alg-cons-tilde}
\widetilde{x}(t) &=& \widetilde{W}^{\tau_x(t)} \, \widetilde{y}(t-1) \\
                 &-& \alpha \widetilde{W}^{\tau_x(t)} (I-J)\nabla F(y(t-1)) \nonumber \\
\label{eqn-our-alg-cons--tilde-druga}
\widetilde{y}(t) &=& \widetilde{W}^{\tau_y(t)} \left[ \,\,\widetilde{x}(t) + \beta_{t-1} (\widetilde{x}(t)-
\widetilde{x}(t-1))\,\,\right].
\end{eqnarray}
\vspace{-0mm}
We upper bound $\|\widetilde{x}(t)\|$ and $\|\widetilde{y}(t)\|$ from~\eqref{eqn-our-alg-cons-tilde},~\eqref{eqn-our-alg-cons--tilde-druga}.
Recall $\|\widetilde{W}\|=\mu(W) :=\mu \in (0,1)$;
from~\eqref{eqn-consensus} and~\eqref{eqn-consensus-2}, we have
 $\mu^{\tau_x(t)} \leq \frac{1}{t^2}$ and $\mu^{\tau_y(t)} \leq \frac{1}{3 t^2}.$
From~\eqref{eqn-our-alg-cons-tilde}, using
the sub-additive and sub-multiplicative properties of norms, and using
$\|\widetilde{y}(t-1)\|\leq B_{t-1}$, $\mu \in (0,1)$,
 $\|(I-J)\nabla F(y(t-1))\| \leq \|\nabla F(y(t-1)) \| \leq \sqrt{N} G$, $\beta_{t-1}\leq 1$:
\begin{eqnarray}
\|\widetilde{x}(t)\|
&\leq&
%
%
%
%
\mu^{\tau_x(t)} \, B_{t-1} + \alpha \mu^{\tau_x(t)} \sqrt{N} {G} \nonumber \\
\label{eqn-111}
&\leq& \frac{1}{t^2}\,B_{t-1}+\alpha \sqrt{N} G \frac{1}{t^2}%
 \\
\|\widetilde{y}(t)\| &\leq& 2\, \mu^{\tau_y(t)} \|\widetilde{x}(t) \| + \mu^{\tau_y(t)} \|\widetilde{x}(t-1)\|
\nonumber \\
&\leq&
2\, \mu^{\tau_x(t)+\tau_y(t)} B_{t-1} + 2 \alpha \sqrt{N} G\,\mu^{\tau_x(t)+\tau_y(t)} \nonumber\\
&+& \mu^{\tau_y(t)} B_{t-1} \nonumber \\
&\leq&
3 \,\mu^{\tau_y(t)} B_{t-1} +  2 \alpha \sqrt{N} G\mu^{\tau_y(t)} \nonumber \\
\label{eqn-222}
&\leq& \frac{1}{t^2} B_{t-1} + \alpha \sqrt{N} G \frac{1}{t^2}.
\end{eqnarray}
%
%

   %
   %
  %
  %
  %
  %
  %
Clearly, from~\eqref{eqn-111} and~\eqref{eqn-222}:%
$
   B_t \leq
   \frac{1}{t^2} B_{t-1} + \frac{1}{t^2} \alpha \sqrt{N}  {G}.
$
Next, using $B_0=0$, unwind the latter recursion
for $k=1,2$, to obtain, respectively: $B_1 \leq \alpha \sqrt{N} G$
 and $B_2 \leq \alpha \sqrt{N} G/2$, and so the bound in Theorem~7 holds
 for $k=1,2.$ Further, for $k \geq 3$ unwinding the same recursion for $t=k,k-1,...,1$:
\begin{eqnarray*}
B_k 
&\leq&
\frac{\alpha \sqrt{N} G}{k^2}  ( 1+\sum_{t=2}^{k-1}\frac{1}{(k-1)^2(k-2)^2...t^2}\\
&+& \frac{1}{(k-1)^2(k-2)^2...2^2 })\\
&\leq&
\frac{\alpha \sqrt{N} G}{k^2}  \left( 1+\sum_{t=2}^{k-1}\frac{1}{t^2}+\frac{1}{2^2}\right)\\
&\leq&
\frac{\alpha \sqrt{N} G}{k^2}  \left( \frac{\pi^2}{6}+\frac{1}{4}\right)
\leq \frac{2\,\alpha \sqrt{N} G}{k^2},
%
%
\end{eqnarray*}
where we use $1+\sum_{t=2}^{k-1}\frac{1}{t^2} \leq \pi^2/6$,~$\forall k \geq 3.$
%
%
\end{IEEEproof}
\vspace{-0mm}
\subsection{Convergence rate and network scaling}
\label{subsection-D-NC-rate}
%
%
%
We are now ready to state the Theorem on the convergence rate of~D--NC.
\begin{theorem}
\label{theorem-conv-rate-cons-alg-new}
Consider the algorithm D--NC under Assumptions~\ref{assumption-network}~(a),~\ref{assumption-f-i-s}, and~\ref{assumption-bdd-gradients}. 
  Let $\|\overline{x}(0)-{{x^\star}}\| \leq R$, $R \geq 0$. Then,
 after
$
 \mathcal{K} = \sum_{t=1}^{k}\left(\tau_x(t)+\tau_y(t)\right) \leq \frac{2}{-\log \mu(W)} \left( k \log 3 + 2(k+1)\log (k+1) \right)
 = O\left(k \log k\right)
$
 %
 %
 communication rounds, i.e., after~$k$ outer iterations, at any node~$i$:
\begin{eqnarray}
\label{eqn-theorem-alg-new-cons}
 &\,& \frac{1}{N}\left(f(x_i(k)) - f^\star\right) \\
 &\leq&
\frac{1}{k^2} \left( \frac{2}{\alpha}  R^2
+ 11 \,\alpha^2 L G^2 + \alpha \sqrt{N} G^2 \right),\:\:k=1,2,...\nonumber
\end{eqnarray}
\end{theorem}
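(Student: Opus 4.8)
The plan is to mirror the two-step structure used for Theorem~\ref{theorem-basic-convergence-result}~(a), exploiting that D--NC has already been cast in the inexact-oracle framework. As established in Subsection~\ref{subsection-algs-in-framework}, the global averages $\overline{x}(k),\overline{y}(k)$ obey~\eqref{eqn-our-alg-CM} with the \emph{constant} Lipschitz constant $L_{k-1}^\prime = N/\alpha$, and since $\alpha \le 1/(2L)$ we have $L_{k-1}^\prime \ge 2NL$; hence Lemma~\ref{lemma-progress-one-iteration} applies for every $k$ with $\delta_{k-1} = L\|\widetilde{y}(k-1)\|^2$. In Step~1 I would bound the centralized gap $f(\overline{x}(k)) - f^\star$, and in Step~2 pass from the average $\overline{x}(k)$ to an arbitrary node iterate $x_i(k)$ using convexity and the bounded-gradient Assumption~\ref{assumption-bdd-gradients}.

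For Step~1, applying Lemma~\ref{lemma-progress-one-iteration} with $x^\bullet = x^\star$ gives, for each $k$,
\begin{equation*}
(k+1)^2\big(f(\overline{x}(k)) - f^\star\big) + \tfrac{2N}{\alpha}\|\overline{v}(k) - x^\star\|^2 \le (k^2-1)\big(f(\overline{x}(k-1)) - f^\star\big) + \tfrac{2N}{\alpha}\|\overline{v}(k-1) - x^\star\|^2 + L\|\widetilde{y}(k-1)\|^2 (k+1)^2 .
\end{equation*}
The crucial simplification relative to D--NG is that the factor on the iterate-distance term is now constant in $k$ (because the step size is constant), so the recursion telescopes directly. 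Introducing the potential $\Psi_k := \big((k+1)^2-1\big)\big(f(\overline{x}(k)) - f^\star\big) + \tfrac{2N}{\alpha}\|\overline{v}(k) - x^\star\|^2$, and noting that the coefficient $((k-1)+1)^2-1 = k^2-1$ of $\Psi_{k-1}$ matches the one on the right-hand side while $f(\overline{x}(k-1))\ge f^\star$, yields $\Psi_k \le \Psi_{k-1} + L\|\widetilde{y}(k-1)\|^2(k+1)^2$. Since $\overline{v}(0)=\overline{x}(0)$ and the coefficient $(0+1)^2-1$ vanishes, the initial optimality gap is annihilated and $\Psi_0 = \tfrac{2N}{\alpha}\|\overline{x}(0)-x^\star\|^2$, so unwinding gives
\begin{equation*}
f(\overline{x}(k))-f^\star \le \frac{1}{(k+1)^2-1}\left( \frac{2N}{\alpha}R^2 + L\sum_{t=1}^{k} (t+1)^2\|\widetilde{y}(t-1)\|^2 \right).
\end{equation*}

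The decisive step is bounding the weighted disagreement sum via Theorem~\ref{theorem-distance-cons-consensus-alg}. With $\|\widetilde{y}(t-1)\| \le 2\alpha\sqrt{N}\,G/(t-1)^2$ (and $\|\widetilde{y}(0)\|=0$ by the equal-initialization assumption), each term $(t+1)^2\|\widetilde{y}(t-1)\|^2$ is $O\big(1/(t-1)^2\big)$, so $\sum_{t\ge 2}(t+1)^2/(t-1)^4$ converges to a finite absolute constant. This is precisely where the $\lceil 2\log k/(-\log\mu(W))\rceil$ inner consensus rounds pay off: they force $\mu(W)^{\tau_x(k)}\le 1/k^2$ and hence the quadratically fast decay of the disagreement. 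This is the main obstacle, and the heart of why D--NC beats D--NG: for D--NG the disagreement decays only as $1/k$, the oracle Lipschitz constant grows like $k$, and the analogous sum is harmonic, producing the extra $\log k$ factor; here the constant step size (constant oracle Lipschitz) together with the quadratically accurate consensus keep the summed inexactness bounded. Using $(k+1)^2-1 = k(k+2)\ge k^2$ then delivers $\tfrac1N\big(f(\overline{x}(k))-f^\star\big) \le \tfrac1{k^2}\big(\tfrac{2}{\alpha}R^2 + c_1\,\alpha^2 L G^2\big)$ for an explicit finite constant $c_1$ (the $11$ in~\eqref{eqn-theorem-alg-new-cons} after collecting constants).

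Finally, in Step~2 I would fix a node $i$ and use, for each $j$, convexity $f_j(x_i(k)) \le f_j(\overline{x}(k)) + \nabla f_j(x_i(k))^\top(x_i(k)-\overline{x}(k))$ together with $\|\nabla f_j\|\le G$, summing over $j$ and applying $\sum_i\|\widetilde{x}_i(k)\| \le \sqrt{N}\|\widetilde{x}(k)\|$ to get $f(x_i(k)) - f^\star \le f(\overline{x}(k)) - f^\star + G\sqrt{N}\,\|\widetilde{x}(k)\|$. Substituting $\|\widetilde{x}(k)\|\le 2\alpha\sqrt{N}\,G/k^2$ from Theorem~\ref{theorem-distance-cons-consensus-alg} contributes the remaining term of order $\alpha G^2/k^2$, giving~\eqref{eqn-theorem-alg-new-cons}. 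The communication count $\mathcal{K}=\sum_{t=1}^k(\tau_x(t)+\tau_y(t)) = O(k\log k)$ follows routinely by bounding the ceilings in~\eqref{eqn-consensus}--\eqref{eqn-consensus-2} and summing the $\log t$ terms.
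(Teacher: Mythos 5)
Your proposal is correct and follows essentially the same route as the paper, whose own argument for this theorem is only an outline deferring to an earlier arXiv version: cast the averages in the inexact-oracle framework with the constant $L_{k-1}^\prime = N/\alpha \ge 2NL$, telescope Lemma~\ref{lemma-progress-one-iteration} (which works cleanly because the coefficient $2N/\alpha$ no longer varies with $k$), control the accumulated inexactness via the $O(1/k^2)$ disagreement bound of Theorem~\ref{theorem-distance-cons-consensus-alg} so that $\sum_t (t+1)^2\|\widetilde{y}(t-1)\|^2$ is finite, and then pass from $\overline{x}(k)$ to $x_i(k)$ exactly as in Step~2 of the proof of Theorem~\ref{theorem-basic-convergence-result}~(a). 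Your added remark correctly identifies why the constant step size plus quadratically accurate consensus removes the $\log k$ factor present for D--NG; the only loose end is that you assert rather than verify that the collected numerical constant equals the $11$ in~\eqref{eqn-theorem-alg-new-cons}, which would require the sharper per-term disagreement bounds from the proof of Theorem~\ref{theorem-distance-cons-consensus-alg} rather than the uniform $2\alpha\sqrt{N}G/k^2$ estimate.
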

%
%
%
%
 %
%
%
%
\begin{IEEEproof}[Proof outline]
%
The proof is very similar to the proof of Theorem~\ref{theorem-basic-convergence-result}~(a) (for details see~\cite{arxivVersion}, second version~v2);
 first upper bound $f(\overline{x}(k))-f^\star$, and then~$f(x_i(k))-f^\star$.
%
To upper bound $f(\overline{x}(k))-f^\star$, recall that the evolution~\eqref{eqn-our-alg-CM}
  with $\alpha_k=\alpha$ for $(\overline{x}(k),\overline{y}(k))$ is the
  inexact Nesterov gradient with the inexact oracle $(\widehat{f}_k,\widehat{g}_k)$
   in~\eqref{eqn-f-k-g-k}, and
   $(L_k = 2 N L, \,\delta_k = L\|\widetilde{y}(k)\|^2)$.
    Then, apply Lemma~\ref{lemma-progress-one-iteration} with
      $x^\bullet \equiv {{x^\star}}$ and $L_{k-1}^\prime=N/\alpha$, 
and use Theorem~\ref{theorem-distance-cons-consensus-alg},
to obtain:
\begin{eqnarray}
\label{eqn-opt-gap-f-x-bar-cons}
 f(\overline{x}(k)) - f^\star \hspace{-1mm}&\leq& \hspace{-1mm}
\frac{1}{k^2} \left( \frac{2 N  R^2}{\alpha}
+ 11 \,\alpha^2 L N G^2  \right)
\end{eqnarray}
 Finally, find the bound on $f({x}_i(k))-f^\star$
 analogously to the proof of Theorem~\ref{theorem-basic-convergence-result}~(a).
\end{IEEEproof}
%
%
%


\vspace{-0mm}
\textbf{Network scaling}. We now give the network scaling for algorithm D--NC in Theorem~\ref{corollary-spec-top-cons-alg}.
 We assume that nodes know $L$ and $\mu(W)$ before the algorithm run.
%
%
%
%
%
\begin{theorem}
 \label{corollary-spec-top-cons-alg}
Consider D--NC under Assumptions~\ref{assumption-network}~(a),~\ref{assumption-f-i-s},~and~\ref{assumption-bdd-gradients} with step-size $\alpha\leq 1/(2L)$.
 Then, after $k$ outer iterations and~$\mathcal{K}$
  communication rounds, at any node $i$, 
  $\frac{1}{N} \left(f(x_i)-f^\star \right)$ is~$
O\left( \frac{1}{\left(\left( 1-\mu \right)\mathcal{K}^{1-\xi}\right)^2}
+
\frac{\sqrt{N}}{\left(\left( 1-\mu \right)\mathcal{K}^{1-\xi}\right)^3}
+
\frac{N}{\left(\left( 1-\mu \right)\mathcal{K}^{1-\xi}\right)^4}
 \right)$ and $O\left(\frac{1}{k^2}+\frac{N^{1/2}}{k^3}+\frac{N}{k^4}\right).$
\end{theorem}
\begin{IEEEproof}
Fix $\xi \in (0,1)$,
and let $\mathcal K$ be the number of elapsed
communication rounds after $k$ outer iterations.
 There exists $C_0(\xi) \in (1,\infty)$,
 such that, $2 \left(k \log 3 + 2(k+1)\log(k+1)\right) \leq C_0(\xi)k^{1+\xi},$ $\forall k \geq 1.$
  The latter, combined with~$1/(-\log \mu(W)) \leq 1/(1-\mu(W))$, $\mu(W) \in [0,1)$,
   and the upper bound bound on~$\mathcal K$ in Theorem~\ref{theorem-conv-rate-cons-alg-new},
   gives: $1/k \leq \left( C_0(\xi)\right) \frac{1}{(1-\mu) {\mathcal K}^{1- \xi}}$.
   Plugging the latter in the optimality gap bound in Theorem~\ref{theorem-conv-rate-cons-alg-new}
   gives a scaling result $O(N^{1/2}/[(1-\mu){\mathcal K}]^2)$
    and $O(N^{1/2}/k^2)$. To prove Theorem~9,
    we proceed analogously to the proof of Theorem~6.
    From Theorem~8 and
    $\|\nabla f(\overline{x}(k))\|\leq \sqrt{2 N L}\sqrt{f(\overline{x}(k))-f^\star}$,
    $\|\nabla f(\overline{x}(k))\| =O(N/k)$. Consider~\eqref{eqn-nova-jednacina}.
    Subtracting $f^\star$, dividing by $N$, and
    using $\|\nabla f(\overline{x}(k))\| = O(N/k)$ and~\eqref{eqn-opt-gap-f-x-bar-cons},
    we obtain~$\frac{1}{N}(f(x_i(k))-f^\star)=O(1/k^2+N^{1/2}/k^3+N/k^4)$.
     Finally, substitute
    $1/k \leq \left( C_0(\xi)\right) \frac{1}{(1-\mu) {\mathcal K}^{1- \xi}}$ in the last bound.
\end{IEEEproof}
\section{Comparisons with the literature and Discussion of the Assumptions}
\label{section-comparisons}
Subsection~\ref{subsection-comparisons} compares
 D--NG,~D--NC, and the distributed (sub)gradient algorithms in~\cite{nedic_T-AC,duchi,AnnieChen}, from the aspects of implementation and
convergence rate; Subsection~\ref{subsection-discussion-bounded-gradients} gives a detailed
discussion on~Assumptions~\ref{assumption-network}--\ref{assumption-bdd-gradients}.
\subsection{Comparisons of D--NG and D--NC with the literature}
\label{subsection-comparisons}
%
%
%
%
%
%
%
%
%
%
%
We first set up the comparisons by explaining how to account for Assumption~\ref{assumption-network}~(b) and by
adapting the results in~\cite{AnnieChen,AnnieChenThesis} to our framework.

\textbf{Assumption~\ref{assumption-network}~(b)}. To be fair, we account for Assumption~\ref{assumption-network}~(b) with D--NG as follows.
Suppose that the nodes are given arbitrary symmetric, doubly stochastic weights~$W$
with $\mu(W)<1$ -- the matrix required by D--NC and~\cite{nedic_T-AC,duchi,AnnieChen}.
 (For example, the Metropolis weights~$W$.)
 As the nodes may not be allowed to check whether the given $W$ obeys Assumption~\ref{assumption-network}~(b) or not,
 they modify the weights to $W^\prime:=\frac{1+{\eta}}{2}I+\frac{1-{\eta}}{2}W$, where~$\eta \in (0,1)$
  can be taken arbitrarily small. The matrix~$W^\prime$ obeys Assumption~\ref{assumption-network}~(b), whether $W$ obeys it or not.
The modification is done without any required knowledge of the system parameters nor inter-node communication;
node $i$ sets: 1) $W^\prime_{ij}=\frac{1-{\eta}}{2}W_{ij}$, for $\{i,j\} \in E$, $i \neq j$;
  2) $W^\prime_{ij}=0$, for $\{i,j\} \notin E$, $i \neq j$;
  and 3) $W^\prime_{ii}:=1-\sum_{j \neq i} W^\prime_{ij}$. To be fair, when we compare D--NG with other methods (either
   theoretically as we do here or numerically as done in Section~\ref{section-simulations}), we set its weights to~$W^\prime$.
  For theoretical comparisons, from Theorem~\ref{theorem-basic-convergence-result}, the
 convergence rate of D--NG depends on $W^\prime$ through
 the inverse spectral gap~$1/(1-\mu(W^\prime)).$
 It can be shown that $\frac{1}{1-\mu(W^\prime)}=\frac{2}{1-{\eta}}\frac{1}{1-\mu(W)},$
  i.e., the spectral gaps of $W$ and $W^\prime$
   differ only by a constant factor and the weight modification does not affect the convergence rate~(up to a numerical constant); henceforth, we express the theoretical rate for D--NG in terms~of~$W$.

\textbf{References~\cite{AnnieChen,AnnieChenThesis}} develop and analyze non-accelerated and accelerated
 distributed gradient and proximal gradient methods for time-varying networks and convex $f_i$'s that have a differentiable component with Lipschitz
 continuous and bounded gradient and a non-differentiable component with bounded gradient.
 To compare with~\cite{AnnieChenThesis}, we adapt it to our framework
 of static networks and differentiable~$f_i$'s. (We set the non-differentiable components of the $f_i$'s to zero.)
   References~\cite{AnnieChen,AnnieChenThesis} assume deterministic time-varying networks.
  To adapt their results to our static network setup in a fair way,
  we replace the parameter~$\gamma$ in~\cite{AnnieChen}~(see equation~(7) in~\cite{AnnieChen}) with~$\mu(W)$.
   The references propose two variants of the accelerated algorithm:
              the first (see~(6a)--(6d) in~\cite{AnnieChen}) has $k$ inner consensus iterations at the outer
               iteration~$k$, while the second
               one has~$\lceil  4 \log (k+1)/(-\log \mu) \rceil$~(See Subsection~{III-C} in~\cite{AnnieChen}.)
The bounds established in~\cite{AnnieChen} for the second variant give its
                rate: 1)~$O\left( \frac{N^2}{(1-\mu(W))^2\mathcal{K}^{2-\xi}} \right)$,
                when nodes know $\mu(W)$ and $L$.
                  The first variant has a slower rate~\cite{arxivVersion}.

\textbf{Algorithm implementation and convergence rate}.
Table~1 compares D--NG, D--NC, the algorithm in~\cite{duchi} and
  the second algorithm in~\cite{AnnieChen} with respect to implementation and the number of communications~$\mathcal{K}(\epsilon;N,W)$ to achieve $\epsilon$-accuracy. Here~$\mathcal{K}(\epsilon;N,W)$
    is the smallest number of communication rounds~$\mathcal{K}$ after which
    $\frac{1}{N}(f(x_i)-f^\star) \leq \epsilon$, $\forall i.$
     Regarding implementation, we discuss the knowledge required a priori by all nodes for: 1) convergence (row~1); and 2) both stopping and optimizing the step-size (row~2). By stopping,
we mean determining a priori the (outer)~iteration~$k_0$ such that $\frac{1}{N}(f(x_i(k))-f^\star) \leq \epsilon$, $\forall k \geq k_0$, $\forall i$.
Optimizing the step size here means finding the
step-size that minimizes the established upper bound (in the reference of interest) on the optimality gap (e.g.,
the bound for D--NG in Theorem~\ref{theorem-basic-convergence-result}~(a).) We assume, with all methods, that~$W$ is already given (e.g., Metropolis.) Regarding~$\mathcal{K}(\epsilon;N,W)$, we neglect the logarithmic
and $\xi$-small factors and distinguish two cases: 1) the nodes have no
 global knowledge (row~3); and 2) the nodes know $L,\mu(W)=:\mu.$
We can see from Table~1 that, without global knowledge (row~3),
 D--NG has better dependence on~$\epsilon$ than~\cite{duchi} and worse dependence on~$N,\mu$.
 Under global knowledge~(row~4), D--NC has better complexity than~\cite{AnnieChen} and
 has better dependence on $\epsilon,\mu$ than~\cite{duchi} and a worse dependence on~$N$.
  Further, while D--NG and~\cite{duchi} require no knowledge of any global parameters for convergence~(row~1), D--NC
 and the second algorithm in~\cite{AnnieChen} need $L$ and $\mu(W)$.
 The first variant in~\cite{AnnieChen} requires only $L.$ Also, Table~1 for~\cite{duchi} holds for a wider class of functions, and in row 4, only
 $\mu$ is~needed~\cite{duchi}.
{\begin{tiny}{
\begin{table*}\centering
\begin{tabular}{@{}lllllll@{}}
\toprule
& D--NG & D--NC & \cite{duchi} & \cite{AnnieChen} &
\\
{\small{Kn. for conver.}} & {\small{none}} & $L,\mu$ & {\small{none}} & $L$, $\mu$ \\
{\small{Kn. for stop.; s.s.}} & $\mu,R,G,L,N$ & $\mu,R,G,L,N$ & $\mu,R,G,N$ & $\mu,R,G,L,N$ \\
{\small{$\mathcal{K}(\epsilon;N,W)$: No kn.}} & $\frac{1}{(1-\mu)^3 \epsilon } \hspace{-.7mm}+\hspace{-.7mm}
\frac{N^{1/3}}{(1-\mu)^2 \epsilon^{2/3} } \hspace{-.7mm}+\hspace{-.7mm} \frac{N^{1/2}}{(1-\mu)^{3/2} \epsilon^{1/2} }$ & {\small{not guarant.}} &
   $ \frac{1}{(1-\mu)^2 \epsilon^2} $     & {\small{not studied}}  \\
{\small{$\mathcal{K}(\epsilon;N,W)$: $L,\mu$}} & $ \frac{1}{(1-\mu)\epsilon}\hspace{-.7mm}
+\hspace{-.7mm}\frac{N^{1/3}}{(1-\mu)^{2/3}\epsilon^{2/3}}\hspace{-.7mm}+\hspace{-.7mm}\frac{N^{1/2}}{(1-\mu)^{1/2}\epsilon^{1/2}} $ &
$ \frac{1}{(1-\mu){\epsilon^{1/2}}} + \frac{N^{1/6}}{(1-\mu){\epsilon^{1/3}}} + \frac{N^{1/4}}{(1-\mu){\epsilon^{1/4}}}$ &
$\frac{1}{(1-\mu)\epsilon^2} $  &
$ \frac{N}{(1-\mu){\epsilon^{1/2}}}$  \\
\bottomrule
\end{tabular}
\vspace{1mm}
\caption{Comparisons of algorithms D--NG, D--NC,~\cite{duchi}, and~\cite{AnnieChen}~(algorithms~1 and~2).}
\vspace{-0mm}
\end{table*}}
\end{tiny}}

\textbf{Global knowledge~$\mu(W),L,G,R$} (as needed, e.g., by D--NG for stopping) can be obtained as follows. Consider~$L$
and suppose each node knows a Lipschitz constant $L_i$ of its own~$f_i$. Then, $L$ can be taken as $L=\max_{i=1,...,N}L_i.$
 Thus, each node can compute $L$ if nodes run a distributed
 algorithm for maximum computation, e.g.,~(\cite{johansson-max-cons},~(1));
 all nodes get $L$ after~$O(\mathrm{Diam})$ per-node communicated scalars, where
 $\mathrm{Diam}$ is the network diameter.
Likewise, a gradient bound~$G$ can be taken as $G=\max_{i=1,...,N}G_i$,
where $G_i$ is a gradient bound for the~$f_i.$ The quantity $\mu(W)$ (equal to the second largest eigenvalue of $W$) can be computed in a distributed way, e.g., by algorithm {DECENTRALOI}, proposed in~\cite{decentraloi} and adapted to the problem like ours in~[\cite{BoydGossip}, Subsection~{IV}-A, p.~2519].
With DECENTRALOI, node $i$ obtains $q^{\mu}_i$, the $i$-th coordinate of
the $N \times 1$ eigenvector $q^{\mu}$ of $W$ that corresponds to $\mu(W)$, (up to $\epsilon$-accuracy) after
$O\left(\frac{\log^2(N/\epsilon) \log N}{1-\mu(W)} \right)$
 per-node communicated scalars~\cite{decentraloi}; then, node $i$
  obtains $\mu(W)$ as:~$\frac{\sum_{j \in O_i} W_{ij} q^{\mu}_j}{q^{\mu}_i}$.

Consider now D--NC when nodes do not have
available their local gradient Lipschitz constants~$L_i$.
Nodes can take a diminishing step size~$\alpha_k=1/(k+1)^p$, $p \in (0,1]$,
 and still guarantee convergence, with a deteriorated rate~$O\left( \frac{1}{\mathcal{K}^{2-p-\xi}}\right).$
 In alternative, it may be possible to employ
 a ``distributed line search,'' similarly to~\cite{distributedLineSearch}. Namely, in the absence of knowledge of the gradient's Lipschitz constant~$L$,
 the centralized Nesterov gradient method
 with a backtracking line search achieves the same rate~$O(1/k^2)$,
  with an additional computational cost per iteration~$k$; see~\cite{Vandenberghe,NesterovLineSearch}.
 It is an interesting research direction to develop
 a variant of distributed line search for D--NC type methods
  and explore the amount of incurred additional communications/computations per
  outer iteration~$k$; due to lack of space, this is left for future work.

\textbf{The $\Omega(1/k^{2/3})$ lower bound on the worst-case optimality gap for~\cite{nedic_T-AC}}.
 %
 %
We focus on the dependence on $k$ and $\mathcal{K}$ only (assuming a finite, fixed~$1/(1-\mu(W))$.)
 We demonstrate that D--NG
has a strictly better worst-case convergence rate in $k$ (and~$\mathcal{K}$) than~\cite{nedic_T-AC},
  when applied to the $f_i$'s defined by Assumptions~\ref{assumption-f-i-s} and~\ref{assumption-bdd-gradients}.
  Thus, D--NC also has a better rate.
%
%
%
%
%
%
%

Fix a generic, connected network $\mathcal{G}$ with $N$ nodes and $W$ that
obeys Assumption~\ref{assumption-network}.
Let $\mathcal{F}=\mathcal{F}(L,G)$ be the class
of all $N$-element sets of functions $\{f_i\}_{i=1}^N$, such that: 1)
each $f_i: {\mathbb R}^d \rightarrow \mathbb R$ is convex, has Lipschitz continuous derivative with constant $L$, and
 bounded gradient with bound $G$; and 2) Assumption~\ref{assumption-f-i-s}~(a) holds.
 Consider~\eqref{eqn-opt-prob-original} with
 $\{f_i\}_{i=1}^N \in \mathcal{F}$, for all $i$; consider D--NG
  with the step-size $\alpha_k=\frac{c}{(k+1)}$, $k=0,1,...$, $c\leq 1/(2L)$. Denote by:
\begin{eqnarray*}
&\,&\mathcal{E}^{\mathrm{D-NG}}\left( k,R\right) =\\
&\,&\sup_{\{f_i\}_{i=1}^N \in \mathcal{F}} \sup_{\{\overline{x}(0):\,\|\overline{x}(0)-{{x^\star}}\|\leq R\}} \max_{i=1,...,N}\left\{
f(x_i(k))-f^\star\right\}
\end{eqnarray*}
the optimality gap at the $k$-th iteration
of D--NG for the worst~$\{f_i\}_{i=1}^N \in \mathcal{F}$,
and the worst $\overline{x}(0)$ (provided~$\|\overline{x}(0)-{{x^\star}}\|\leq R$.)
  From Theorem~\ref{theorem-basic-convergence-result}~(a), \emph{for any $k=1,2,...$}:
$
  \mathcal{E}^{\mathrm{D-NG}}\left( k,R\right) \leq \mathcal{C} \frac{\log k}{k} = O(\log k/k),
$
 with $\mathcal{C}$ in~\eqref{eqn-constant-scaling}.
 Now, consider the algorithm in~\cite{nedic_T-AC} with
 the step-size $\alpha_k=\frac{c}{(k+1)^\tau}$, $k=0,1,...,$ where $c \in [\,c_0,\,1/(2 L)\,]$, $\tau \geq 0$
  are the degrees of freedom, and $c_0$ is an arbitrarily small positive number.
   With this algorithm,
   $k=\mathcal{K}.$ We show that, for the $N=2$-node connected network,
   the weight matrix $W$ with $W_{ii}=7/8$, $i=1,2$, and $W_{ij}=1/8$, $i \neq j$ (which
    satisfies Assumption~\ref{assumption-network}), and $R=\sqrt{2}$, $L=\sqrt{2}$ and $G=10$, with~\cite{nedic_T-AC}:
    \begin{equation}
    \label{eqn-claim-nedic-hard}
    \inf_{\tau \geq 0, \,c \,\in\, [\,c_0,\,1/(2L)\,]} \mathcal{E} \left( k,R;\tau,c\right) = \Omega \left( \frac{1}{k^{2/3}}\right),
    \end{equation}
where
\begin{eqnarray*}
&\,&\mathcal{E} \left( k,R; \tau,c\right) \\
&=&
\sup_{\{f_i\}_{i=1}^N \in \mathcal{F}} \sup_{\{\overline{x}(0):\,\|\overline{x}(0)-{{x^\star}}\|\leq R\}} \max_{i=1,...,N}\left\{
f(x_i(k))-f^\star\right\}
\end{eqnarray*}
is the worst-case optimality gap when the step-size $\alpha_k=\frac{c}{(k+1)^\tau}$ is used.
We perform the proof by constructing
a ``hard'' example of the functions $f_{i} \in \mathcal{F}(L,G)$ and
a ``hard'' initial condition to
upper bound $\mathcal{E} \left( k,R; \tau,c\right)$; for any fixed $k,c,\tau$, we set: $x_i(0)=:(1,0)^\top$, $i=1,2;$
 $f_{i}=:f_{i}^{{\theta}_k}$, where:
 {
 \small{
\begin{equation}
\label{eqn-f-i-s-hard}
 f^{{\theta}}_i(x) = \left\{ \begin{array}{llll}
 \frac{{{\theta}}(x^{(1)}+(-1)^i)^2}{2} +\frac{(x^{(2)}+(-1)^i)^2}{2} \\
 \mbox{ if ${{\theta}}(x^{(1)}+(-1)^i)^2 +
 (x^{(2)}+(-1)^i)^2 \leq \overline{\chi}^2$} \\
 \overline{\chi} \left( \left[ {{\theta}} (x^{(1)}+(-1)^i)^2 + (x^{(2)}+(-1)^i)^2 \right]^{1/2}-\frac{\overline{\chi}}{2}\right)\\
 \mbox{ else;}
       \end{array} \right.
\end{equation}}}
${\theta}_k=\frac{1}{\sum_{t=0}^{k-1}(t+1)^{-\tau}}$; and $\overline{\chi}=6.$
The proof of~\eqref{eqn-claim-nedic-hard} is in the Appendix. We convey here the underlying intuition.
When $\tau$ is $\epsilon$-smaller (away) from one, we show:
\[
\max_{i=1,2}(f^{{\theta}_k}(x_i(k))-f^{\star,{\theta}_k}) \geq \Omega\left( \frac{1}{k^{1-\tau}} + \frac{1}{k^{2\tau}} \right).
\]
The first summand is the ``optimization term,''
 for which a counterpart exists in the centralized gradient method also.
 The second, ``distributed problem'' term, arises because
 the gradients $\nabla f_i(x^\star)$ of the individual nodes functions are non-zero at the solution $x^\star.$
 Note the two opposing effects with respect to $\tau$: $\frac{1}{k^{1-\tau}}$ (the smaller $\tau \geq 0$, the better)
  and $\frac{1}{k^{2 \tau}}$ (the larger $\tau \geq 0$, the better.)
 To balance the opposing effects of the two summands,
 one needs to take a diminishing step-size; $\tau=1/3$ strikes the needed balance to give the $\Omega(1/k^{2/3})$ bound.
 \vspace{-0mm}
\subsection{Discussion on Assumptions}
\label{subsection-discussion-bounded-gradients}
 We now discuss what may occur if we drop each of the Assumptions made in our main results--Theorems~\ref{theorem-bound-distance-to-consensus} and~\ref{theorem-basic-convergence-result} for D--NG,
 and Theorems~\ref{theorem-distance-cons-consensus-alg} and~\ref{theorem-conv-rate-cons-alg-new} for D--NC.

\textbf{Assumption~\ref{assumption-network}~(a)}. Consider Theorems~\ref{theorem-bound-distance-to-consensus} and~\ref{theorem-distance-cons-consensus-alg}.
 If Assumption~\ref{assumption-network}~(a) is relaxed, then $\widetilde{x}(k)$
 with both methods may not converge to zero. Similarly, consider Theorems~\ref{theorem-basic-convergence-result} and~\ref{theorem-conv-rate-cons-alg-new}. Without Assumption~\ref{assumption-network}~(a),
  $f(x_i(k))$ may not converge to $f^\star$ at any node; e.g., take
  $N=2$, $W=I$, and $f_i$, $i=1,2,$ in the next paragraph.


\textbf{Assumption~\ref{assumption-network}~(b)} is imposed only for D--NG -- Theorems~\ref{theorem-bound-distance-to-consensus} and~\ref{theorem-basic-convergence-result}. We show by simulation that, if relaxed,
$\|\widetilde{x}(k)\|$ and $f(x_i(k))-f^\star$ may grow unbounded. Take $N=2$
 and $W_{11}=W_{22}=1/10$, $W_{12}=W_{21}=9/10$; the Huber losses $f_i:\mathbb R \rightarrow \mathbb R$, $f_i(x) =\frac{1}{2}(x-a_i)^2$ if
 $\|x-a_i\|\leq 1$ and $f_i(x)=\|x-a_i\|-1/2$ else, $a_i=(-1)^{i+1}$; $c=1$, and $x(0)=y(0)=(0,0)^\top$. Then, we verify by simulation~\cite{arxivVersion} that~$\|\widetilde{x}(k)\|$ and $\min_{i=1,2}(\,f(x_i(k))-f^\star\,)$ grow unbounded.

\textbf{Assumption~\ref{assumption-f-i-s}} is not needed for consensus with D--NG and D--NC (Theorems~\ref{theorem-bound-distance-to-consensus} and~\ref{theorem-distance-cons-consensus-alg}), but we impose it for
Theorems~\ref{theorem-basic-convergence-result} and~\ref{theorem-conv-rate-cons-alg-new} (convergence rates of D--NG and D--NC).
This Assumption is standard and widely present in the convergence analysis of gradient methods, e.g.,~\cite{Nesterov-Gradient}.
Nonetheless, we consider what may occur if we relax the requirement on the Lipschitz continuity of the gradient of the $f_i$'s.
For both D--NG and D--NC, we borrow the example functions $f_i: \mathbb R \rightarrow \mathbb R$,
$i=1,2,$ from~\cite{AnnieChenThesis},~pages~{29--31}: $f_1(x)=4 x^3+\frac{3 x^2}{2}$,
$x \geq 1$; $f_1(x) = \frac{15 x^2}{2}-2$, $x<1$; and $f_2(x):=f_1(-x).$
 Then, for D--NG with $W_{11}=W_{22}=1-W_{12}=1-W_{21}=9/10$, $c=1$,
 and $x(0)=y(0)=(-1,1)^\top,$ simulations show that $\|x(k)\|$ and $f(x_i(k))-f^\star$, $i=1,2,$ grow unbounded. Similarly, with D--NC,
 for the same $W$, $\alpha=0.1,$ and $x(0)=y(0)=(-1,1)^\top$,
  simulations show that $f(x_i(k))-f^\star$, $i=1,2$, stays away from zero when $k$ grows~\cite{arxivVersion}.

\textbf{Assumption~\ref{assumption-bdd-gradients}}.
First consider Theorems~\ref{theorem-basic-convergence-result} and~\ref{theorem-conv-rate-cons-alg-new}
 on the convergence rates of D--NG and D--NC. %
%
%
Define the class $\overline{\mathcal F}(L)$ to be the collection
of all $N$-element sets of convex functions $\{f_i\}_{i=1}^N$,
where each $f_i: {\mathbb R}^d \rightarrow \mathbb R$
 has Lipschitz continuous gradient with constant $L$,
 and problem~\eqref{assumption-f-i-s}
 is solvable in the sense of Assumption~\ref{assumption-f-i-s}~(a). (Assumption~\ref{assumption-bdd-gradients} relaxed.)
With the D--NC for the $2$-node connected network,
arbitrary weight matrix $W$ obeying Assumption~\ref{assumption-network}~(a),
 and the step-size $\alpha=1/(2 L)$, we show for $L=1$, $R \geq 0$, that, for any $k \geq 10$ and arbitrarily
 large $M >0$:
 \begin{eqnarray}
 \label{eqn-D-NC-worst-case-lower-bnd}
 &\,&\mathcal{E}(k;R; \alpha=1/(2L)) =\\
 &\,& \hspace{-3mm}
 \sup_{\{f_i\} \in \overline{\mathcal{F}}(L=1)}\:
  \sup_{\overline{x}(0):\|\overline{x}(0)-x^\star\| \leq R}\:\max_{i=1,2} \left( f(x_i(k)) -f^\star\right) \geq M. \nonumber
 \end{eqnarray}
 Note that the above means $\mathcal{E}(k;R; \alpha=1/(2L))=+\infty$, $\forall k \geq 10$, $\forall R \geq 0.$ That is, no matter how large the (outer) iteration number $k$ is,
 the worst case optimality gap is still arbitrarily large.

We conduct the proof by making a ``hard'' instance
for $\{f_i\}_{i=1}^N$: for
a fixed $k,M$, we set $x_i(0)=y_i(0)=0$, $i=1,2$, $f_i:{\mathbb R} \rightarrow {\mathbb R}$,
to $f_i=f_i^{\theta(k,M)}$, where $\theta=\theta(k,M)=8 \sqrt{M} \,k^2$ and:
\begin{equation}
\label{eqn-hard-f-i-s-D-NC-Lb}
f_i^{\theta}(x)=\frac{1}{2}\left( x+(-1)^i\theta \right)^2,\:\:\:\:i=1,2,\:\:\:\:\theta>0.
\end{equation}
Similarly to D--NC,
with D--NG we show in~\cite{arxivVersion} that~\eqref{eqn-D-NC-worst-case-lower-bnd} also holds for the $2$-node connected
network, the symmetric $W$ with $W_{12}=W_{21}=1-W_{11}=1-W_{22} = \frac{1}{2}\left( 1-10^{-6}\right)$ (this $W$
 obeys Assumption~\ref{assumption-network}), $\alpha_k=c/(k+1)$, and $c = \frac{1}{4} \times 10^{-6}$.
The candidate functions are in~\eqref{eqn-hard-f-i-s-D-NC-Lb},
where, for fixed $k \geq 5$, $M>0$, $\theta(k,M)=8 \times 10^6\,k\,\sqrt{M}.$

We convey here the intuition why~\eqref{eqn-D-NC-worst-case-lower-bnd} holds
for D--NG and D--NC, while the proof is in the Appendix.
Note that the solution to~\eqref{eqn-opt-prob-original}
with the $f_i$'s in~\eqref{eqn-hard-f-i-s-D-NC-Lb} is $x^\star=0$,
while $x_i^\star:=\mathrm{arg\,min}_{x \in \mathbb R} f_i(x) = (-1)^{i+1}\theta$, $i=1,2.$
Making $x_1^\star$ and $x_2^\star$
to be far apart (by taking a large $\theta$), problem~\eqref{eqn-opt-prob-original}
 for D--NG and D--NC becomes ``increasingly difficult.''
 This is because the inputs
 to the disagreement dynamics~\eqref{eqn-recursion}
   $(I-J)\nabla F(y(k-1))=(I-J)y(k-1)-(-\theta,\theta)^\top$
    are arbitrarily large, even
    when $y(k-1)$ is close the solution~$y(k-1)\approx (0,0)^\top.$

Finally, we consider what occurs if we drop Assumption~\ref{assumption-bdd-gradients} with Theorems~\ref{theorem-bound-distance-to-consensus} and~\ref{theorem-distance-cons-consensus-alg}. We show with D--NG and the
above ``hard'' examples that $\|\widetilde{x}(k)\| \geq \frac{\sqrt{2}\,c\,\theta }{2\,k}$,
$\forall k \geq 5.$ Hence, $\|\widetilde{x}(k)\|$
 is arbitrarily large by choosing $\theta$ large enough.~(see~\cite{arxivVersion}.) Similarly, with D--NC: $\|\widetilde{x}(k)\| \geq
\frac{\alpha\,\theta\sqrt{2}}{4\,k^2}$, $\forall k \geq 10.$~(see Appendix~C and~\cite{arxivVersion}.)

\vspace{-0mm}
\section{Simulations}
\label{section-simulations}
We compare the proposed D--NG and D--NC algorithms with~\cite{nedic_T-AC,duchi,AnnieChen} on the logistic loss.
 Simulations confirm the increased convergence rates of D--NG and D--NC
with respect to~\cite{nedic_T-AC,duchi} and show a comparable performance with respect to~\cite{AnnieChen}.
More precisely, D--NG achieves an accuracy~$\epsilon$ faster than~\cite{nedic_T-AC,duchi}
 for all~$\epsilon$, while D--NC is faster than~\cite{nedic_T-AC,duchi} at least for $\epsilon \leq 10^{-2}$.
 With respect to~\cite{AnnieChen}, D--NG is faster for lower accuracies ($\epsilon$ in the range $10^{-1}$ to $10^{-4}-10^{-5}$),
 while~\cite{AnnieChen} becomes faster for high accuracies ($10^{-4}-10^{-5}$ and finer); D--NC
 performs slower than~\cite{AnnieChen}.
%

%
\textbf{Simulation setup}. We consider distributed learning via the logistic loss; see, e.g.,~\cite{BoydADMoM} for further details.
Nodes minimize the logistic loss:
$
f(x)=\sum_{i=1}^N f_i(x)=\sum_{i=1}^N \log\left(  1+e^{-b_{i} (a_{i}^\top x_1 + x_0)}\right),
$
where $x=(x_1^\top,x_2^\top)^\top$, $a_{i} \in {\mathbb R}^{2}$ is the node $i$'s feature vector, and
$b_{i} \in \{-1,+1\}$ is its class label. The functions $f_i: {\mathbb R}^d \mapsto {\mathbb R}$, $d=3$,
satisfy Assumptions~\ref{assumption-f-i-s} and~\ref{assumption-bdd-gradients}.
The Hessian~$\nabla^2 f(x) = \sum_{i=1}^N \frac{e^{-c_i^\top x}}{(1+e^{-c_i^\top x})^2} c_i c_i^\top$,
where $c_i=(b_i a_i^\top, b_i)^\top \in {\mathbb R}^3$.
A Lipschitz constant $L$ should satisfy $\|\nabla^2 f(x)\| \leq N L$,
$\forall x \in {\mathbb R}^d.$
Note that $\nabla^2 f(x) \preceq \frac{1}{4}\sum_{i=1}^N c_i c_i^\top$,
because $\frac{e^{-c_i^\top x}}{(1+e^{-c_i^\top x})^2}\leq 1/4$, $\forall x$.
We thus choose $L = \frac{1}{4 N} \left\| \sum_{i=1}^N c_i c_i^\top \right\| \approx 0.3053$.
%
%
%
We generate $a_{i}$ independently over $i$; each entry is drawn from the standard normal distribution. We generate the ``true'' vector ${{x^\star}}=({x_1^\star}^\top, {x_0^\star})^\top$ by drawing its entries independently from the standard normal distribution. The labels are
$
b_{i}=\mathrm{sign}
\left( {{{x^\star}}_1}^\top a_{i}+{{x^\star}}_0+\epsilon_{i}\right),
$
where the $\epsilon_{i}$'s are drawn independently from a normal distribution with zero mean and variance~3.
%
The network is a geometric network: nodes are
placed uniformly randomly on a unit square and the nodes whose distance is
less than a radius are connected by an edge. There are $N=100$ nodes, and the relative degree~$\left(=\frac{\mathrm{number\,\,of\,\,links}}{N(N-1)/2}\right) \approx 10\%$.
%
We initialize
all nodes by $x_i(0)=0$ (and $y_i(0)=0$ with D--NG, D--NC, and~\cite{AnnieChen}). With all algorithms except D--NG,
we use the Metropolis weights~$W$~\cite{BoydFusion}; with D--NG,
we use $W^\prime=\frac{1+{\eta}}{2} I + \frac{1-{\eta}}{2} W$, with ${\eta}=0.1.$
 The step-size $\alpha_k$ is: $\alpha_k=1/(k+1)$, with~D--NG;
$\alpha=1/(2L)$ and $1/L$, with D--NC;
 $1/L$, with~\cite{AnnieChen} (both the 1st and 2nd algorithm variants -- see Subsection~\ref{subsection-comparisons}); and $1/(k+1)^{1/2}$, with~\cite{nedic_T-AC} and~\cite{duchi}.
\footnote{With~\cite{nedic_T-AC,duchi}, $\alpha_k=1/(k+1)^{p}$ and $p=1/2$,
gave the best simulation performance among the choices~$p \in \{1/3,1/2,1\}$.}
    We simulate the normalized (average) error~$\frac{1}{N} \sum_{i=1}^N \frac{f(x_i)-f^\star}{ f(x_i(0))-f^\star}$ versus the total number of communications at all nodes ($=N \mathcal{K}$.)

\textbf{Results}. Figure~1~(top) compares D--NG, D--NC (with step-sizes $\alpha=1/(2L)$ and $1/L$),~\cite{nedic_T-AC,duchi},~\cite{AnnieChen} (both 1st and 2nd variant with $\alpha=1/L$.) We can see that D--NG converges faster than other methods for
accuracies~$\epsilon$ in the range $10^{-1}$ to~$3\cdot 10^{-5}$.
For example, for~$\epsilon = 10^{-2}$, D--NG requires about $10^4$ transmissions;~\cite{AnnieChen}~(2nd variant) $\approx  3.16 \cdot 10^4$;
 D--NC ($\alpha=1/L$) $\approx 4.65 \cdot 10^4$,
 and D--NC with $\alpha=1/(2L)$ $\approx 1.1 \cdot 10^5$;
 and~\cite{AnnieChen}~(1st variant),~\cite{nedic_T-AC},~and~\cite{duchi} -- at least $\approx 1.3 \cdot 10^5$. For high accuracies, $2 \cdot 10^{-5}$ and finer,~\cite{AnnieChen} (2nd variant)
 becomes faster than D--NG. 
  Finally,~\cite{AnnieChen}~(2nd) converges faster than D--NC, while~\cite{AnnieChen}~(1st) is slower than D--NC.

\textbf{Further comparisons of D--NG and D--NC: Huber loss}. We provide an additional
 experiment to further compare
the D--NG and D--NC methods. We show
that the relative performance of D--NC with respect to D--NG
improves when the instance of~\eqref{eqn-opt-prob-original} becomes easier (in the sense explained below.)
We consider a $N=20$-node geometric network with~$\frac{\mathrm{number\,of\,links}}{N(N-1)/2}\approx 32\%$ and Huber losses $f_i:\mathbb R \rightarrow \mathbb R$,
$f_i(x)= \frac{1}{2}\|x-a_i\|^2$ if $\|x-a_i\|\leq 1$,
 and $f_i(x) = \|x-a_i\|-1/2$, else, with $a_i \in \mathbb R$.
  We divide the set of nodes in two groups. For the first group,
  $i=1,...,6$, we generate the $a_i$'s as $a_i=\theta+\nu_i$, where $\theta>0$ is the ``signal''
   and $\nu_i$ is the uniform noise on $[-0.1 \theta,\,0.1\theta].$
   For the second group, $i=7,...,20$, we set $a_i=-\theta+\nu_i$, with the $\nu_i$'s from
   the same uniform distribution. Note that any $x_1^\star \in \mathrm{arg\,min}_{x \in \mathbb R}\sum_{i=1}^6 f_i(x)$ is in $[0.9\theta,\,1.1\theta]$, while
    any $x_2^\star \in \mathrm{arg\,min}_{x \in \mathbb R}\sum_{i=7}^{20} f_i(x)$
    lies in $[-1.1\theta,\,-0.9\theta]$. Intuitively, by making
    $\theta>0$ large, we increase the problem difficulty.
    For a small $\theta$, we are in the ``easy problem'' regime,
    because the solutions $x_1^\star$ and $x_2^\star$
     of the two nodes' groups are close; for a large $\theta$, we are in the ``difficult problem'' regime.
    Figure~1~(bottom)
     plots the normalized average error versus $N \mathcal K$ for
     $\theta \in \{0.01;10;1000\}$ for
     D--NG with $\alpha_k=1/(k+1)$, D--NC with $\alpha=1/L$,
     while both algorithms are initialized by $x_i(0)=y_i(0)=0$.
     We can see that, with D--NC,
     the decrease of $\theta$ makes the convergence faster,
     as expected. (With D--NG, it is not a clear ``monotonic'' behavior.)
      Also, as $\theta$ decreases (``easier problem''),
      the performance of D--NC relative do D--NG improves.
      For $\theta=0.01$, D--NG is initially better, but the curves of D--NG and D--NC
      intersect at the value about~$4 \cdot 10^{-3}$,
      while for $\theta=1000$, D--NG is better for all accuracies as fine as (at least)~$10^{-7}$.
      \vspace{-0mm}
    \begin{figure}[thpb]
      \centering
       \includegraphics[height=2.4 in,width=3.55 in]{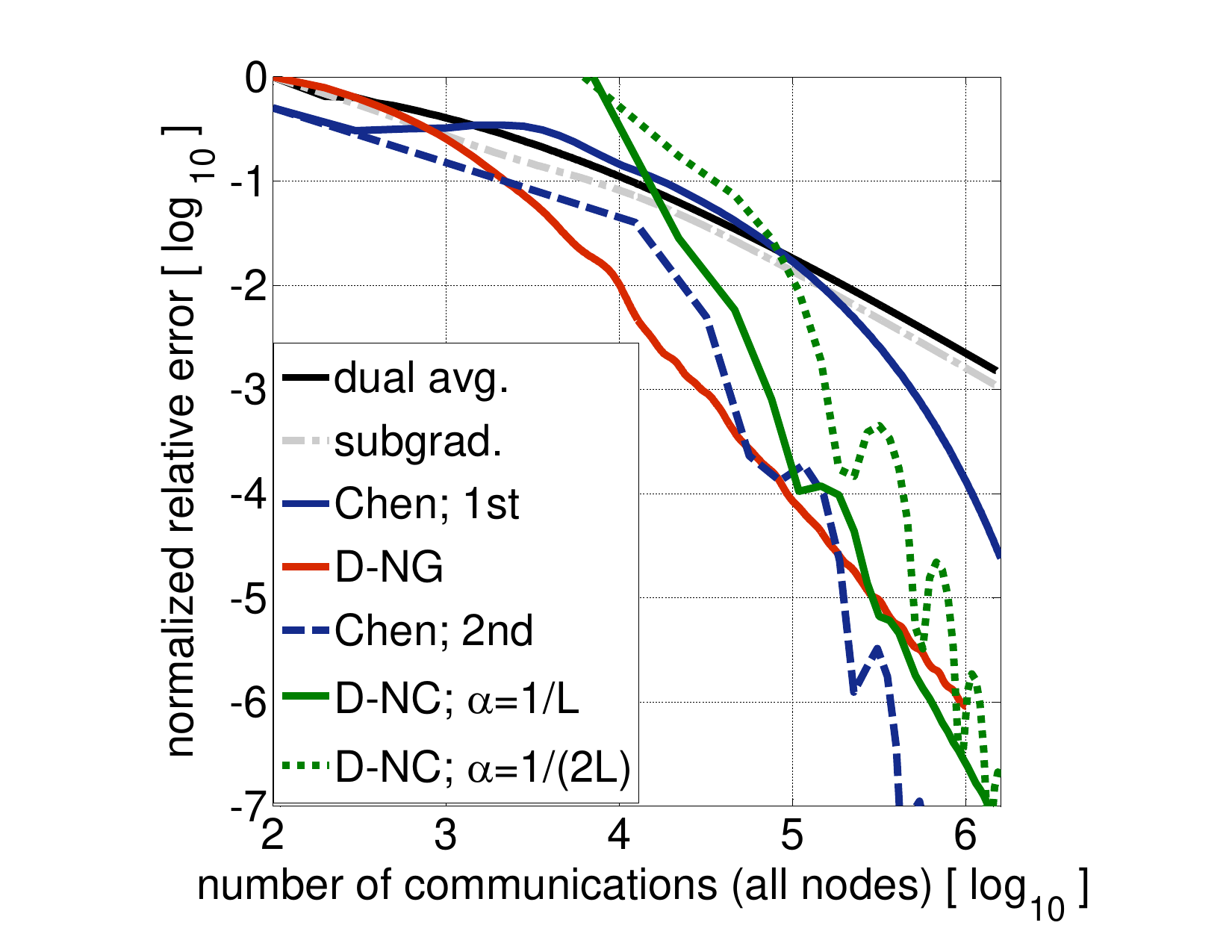}
       \includegraphics[height=2.4 in,width=3.1 in]{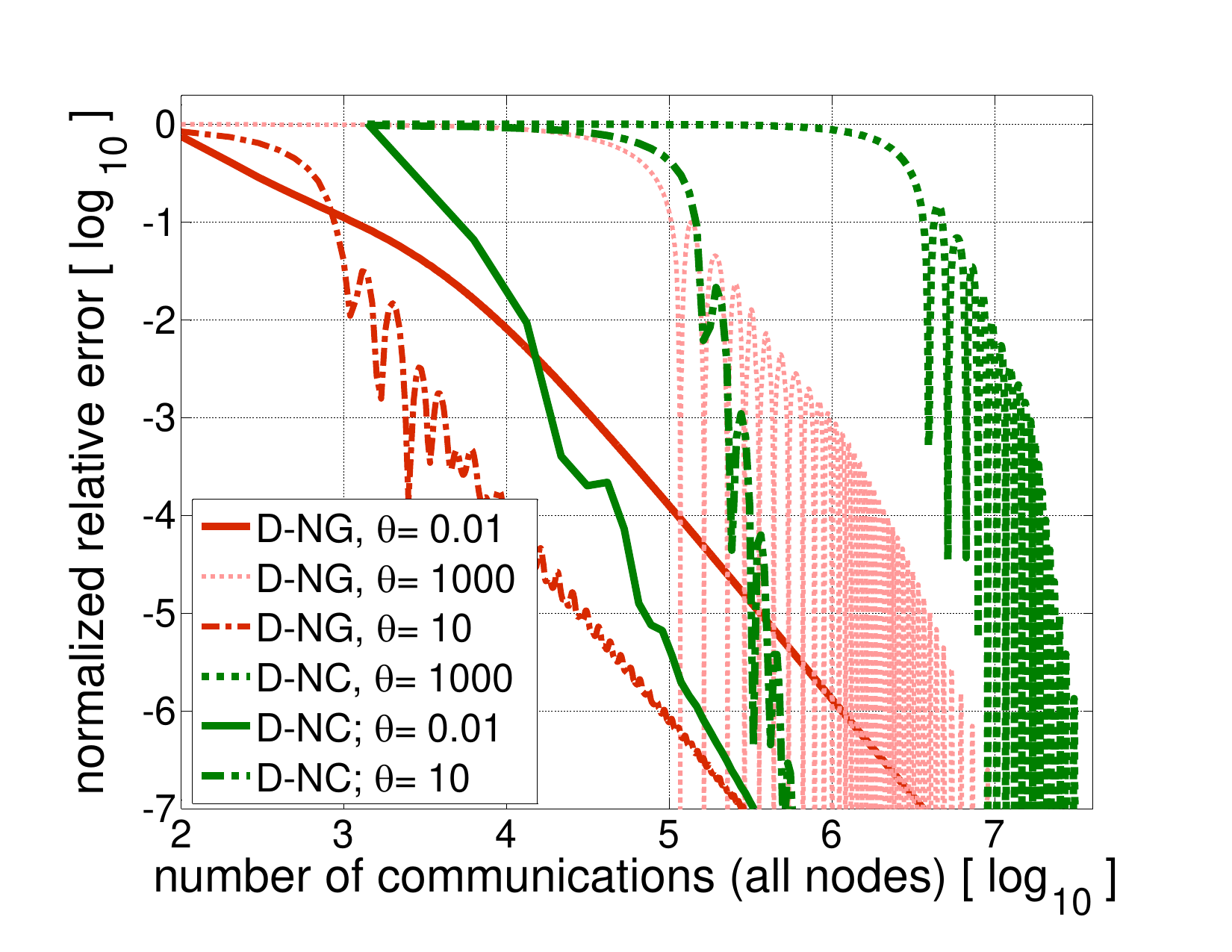}
       \vspace{-0mm}
       \caption{Normalized (average) relative error
       $\frac{1}{N}\sum_{i=1}^N\frac{f(x_i)-f^\star }{f(x_i(0))-f^\star}$
      versus the number of communications (all nodes) $N \mathcal{K}$; \textbf{Left}: Logistic loss;
       \textbf{Right}: Huber loss.}
       \vspace{-0mm}
      \label{figure-1}
\end{figure}

 We give an intuition on the observed behavior.
        Consider an ``easy'' problem with very similar local costs (small $\theta$). In such scenario,
         D--NC over outer iterations~$k$ behaves very similarly to the exact centralized Nesterov gradient method
          \emph{with a constant step-size}~$\alpha$. However, during each~$k$, D--NC uses $\tau_x(k)+\tau_y(k)$ per-node communications which, for the ``easy'' problem, are unnecessary and ``waste'' resources. (These communications are necessary for ``difficult'' problems.)
           Hence, D--NC behaves here as the centralized Nesterov gradient method slowed (re-scaled) through (unnecessary) multiple consensus rounds.
             From the above, it may seem intuitive that the relative performance of D--NC over D--NG
            is poorer for ``easy'' problems due to ``wastes'' in communications; but this does not
            occur in simulations.
            To explain why, consider now D--NG for the same ``easy'' problem. It behaves over~$k$ similarly to the exact centralized Nesterov gradient method \emph{with a diminishing step-size}~$1/k$. Hence,
            not only D--NC behaves as a suboptimal centralized gradient method (due to multiple consensus rounds),
            but also D--NG does, with the source of sub-optimality being the diminishing step-size~$1/k$.
              An intuitive  comparison of these two suboptimal methods on ``easy'' problems is the following.
             For a given network (given~$\mu(W)$), it is natural to expect
            that D--NC converges at a faster rate (steeper slope) than D--NG, but with the curve ``shifted''
             upwards due to the effect of~$\tau_x(k)+\tau_y(k)$. We indeed observe such behavior in Figure~1, bottom, case $\theta=0.01$.
              On the other hand, for ``difficult'' problems (large~$\theta$), the dynamics of disagreements play a significant role and cannot be neglected. Hence, it is much harder to intuitively understand the behavior. As our simulation example indicates, for more ``difficult'' problems (larger $\theta$), the performance of D--NC relative to D--NG actually deteriorates.
              We also performed a simulation with a deteriorated $\mu(W)$,
             while all other parameters are the same as in the above simulation. We increase $\mu(W)$
              by setting, with both D--NG and D--NC, $W^{\prime \prime}=0.9 I + 0.1 W$,
              where $W$ is the Metropolis matrix.
             The relative behavior of D--NC with respect to D--NG still deteriorates with the increase of~$\theta$. (Figure omitted due to lack of space.)


\vspace{-0mm}
\section{Conclusion}
\label{section-conclusion}
We propose fast distributed gradient algorithms when the nodes in a network minimize
the sum of their individual cost functions. Existing literature has presented distributed gradient based
 algorithms to solve this problem and
 has studied their convergence rates, for a class
 of convex, non-differentiable costs, with bounded
 gradients.
  We asked whether faster convergence rates than the
 rates established in the literature can be achieved for more structured costs -- convex, with Lipschitz continuous gradient (with constant $L$) and
 bounded gradient. Building from the centralized Nesterov gradient method,
 we answer affirmatively this question by proposing two distributed gradient algorithms.
Our algorithm D--NG achieves the rates~$O\left(\frac{\log \mathcal{K}}{\mathcal{K}}\right)$
 and~$O\left( \frac{\log k}{k}\right)$. Our algorithm D--NC operates only if~$L$ and~$\mu(W)$ are available and achieves
rates~$O\left(  \frac{1}{\mathcal{K}^{2-\xi}}\right)$ and~$O\left( \frac{1}{k^2}\right)$.
We also found convergence constants in terms of the network parameters.
Simulations illustrate the performance of the proposed methods.

\section*{Acknowledgment}
We thank an anonymous reviewer whose instructive comments led us to develop algorithm D--NC. We also thank the anonymous reviewers and the associate editor for several useful suggestions regarding the presentation and organization of the paper. We thank as well Jo\~ao F. C. Mota for pointing us to relevant references and for useful discussions.

\vspace{-0mm}
\section*{Appendix}
\vspace{-0mm}
\subsection{Proof of Theorem~\ref{theorem-bound-distance-to-consensus}}
\label{subsection-appendix-proof-consensus}
For notational simplicity, we let $d=1$, but the proof extends to $d>1.$
We outline the main steps in the proof. First, we unwind
the recursion~\eqref{eqn-recursion} and
calculate the underlying time varying system matrices.
Second, we upper bound the norms of the time varying system matrices.
 Finally, we use these bounds and a summation argument to complete the proof of the~Theorem.
\vspace{-0mm}
%
%
\subsubsection{Unwinding~\eqref{eqn-recursion} and calculating the system matrices}
Define the $2N \times 2N$ system matrices:
\begin{equation}
\label{eqn-phi-k-t}
\Phi(k,t) := \Pi_{s=2}^{k-t+1}  \left[ \begin{array}{cc}
(1+\beta_{k-s}) \widetilde{W} & -\beta_{k-s} \widetilde{W} \\
I & 0
 \end{array} \right],\:k>t,
\end{equation}
and $\Phi(k,k)=I.$
Unwinding~\eqref{eqn-recursion}, the solution to~\eqref{eqn-recursion} is:
\begin{eqnarray}
\label{eqn-ol2}
&\,&(\widetilde{x}^\top(k),\,\widetilde{x}^\top(k-1))^\top = \sum_{t=0}^{k-1} \Phi(k,t+1) \, \alpha_{t}\, \\
&\,&\hspace{11mm}\times(\, (\,-\nabla F(y(t))\,)^\top (I-J),0\,)^\top, \:k=1,2,...\nonumber
\end{eqnarray}
%
%
%
%
%
We now show the interesting structure of the matrix $\Phi(k,t)$ in~\eqref{eqn-phi-k-t}
by decomposing it into the product of an
orthonormal matrix $U$, a block-diagonal matrix,
and $U^\top$. While $U$ is independent of $k$ and $t$, the block diagonal matrix
depends on $k$ and $t$, and has $2 \times 2$ diagonal blocks. Consider the matrix in~\eqref{eqn-recursion} with $k-2=t$, for a generic $t=-1,0,1,...$
Using~$\widetilde{W} =Q \widetilde{\Lambda} Q^\top$:
\begin{eqnarray}
&\,&\left[ \begin{array}{cc}
(1+\beta_{t}) \widetilde{W} & -\beta_{t} \widetilde{W} \\
I & 0
 \end{array} \right] \nonumber \\
 &=&
\label{eqn-using-ova}
(Q \oplus Q) \, P \left(\oplus_{i=1}^N \Sigma_i(t)\right) P^\top (Q \oplus Q)^\top,
\end{eqnarray}
where $P$ is the $2 N \times 2 N$ permutation matrix ($e_i$ here is the
$i$--th column of the $2 N \times 2 N$ identity matrix)
 $
 P = \left[ e_1, e_{N+1}, e_2, e_{N+2},..., e_{N}, e_{2N}\right]^\top,
 $
and $\Sigma_i(t)$ is a
 $2 \times 2$ matrix with
 $=(1+\beta_{t}) \lambda_i(\widetilde{W})$,
  $\left[ \Sigma_i(t)\right]_{12}=-\beta_{t} \lambda_i(\widetilde{W})$,
   $\left[ \Sigma_i(t)\right]_{21}=1, $ and $\left[ \Sigma_i(t)\right]_{22}=0.$
Using \eqref{eqn-using-ova},
and the fact that $(Q \oplus Q) P$ is orthonormal:
$\left((Q \oplus Q)P \right) \cdot \left((Q \oplus Q) P\right)^\top = (Q \oplus Q) P P^\top (Q\oplus Q)^\top
= (Q Q^\top) \oplus (QQ^\top)=I$, we can express $\Phi(k,t)$ in~\eqref{eqn-phi-k-t} as:
\begin{eqnarray}
\label{eqn-Phi-k-t-simplified}
&\,&\Phi(k,t) :=
\\
&\,&(Q \oplus Q) P \left(\oplus_{i=1}^N \Pi_{s=2}^{k-t+1} \Sigma_i(k-s) \right)P^\top (Q \oplus Q)^\top \nonumber \\
&\,&
 \mathrm{for}\:\:k>t;\:\:\Phi(k,k)=I. \nonumber
\end{eqnarray}

\subsubsection{Bounding the norm of $\Phi(k,t)$} As $(Q \oplus Q) P$ is orthonormal, $\Phi(k,t)$ has the same singular values as $\oplus_{i=1}^N \Pi_{s=2}^{k-t+1} \Sigma_i(k-s)$,
  and so these two matrices also share the same spectral norm (maximal singular value.)
  Further, the matrix $\oplus_{i=1}^N \Pi_{s=2}^{k-t+1} \Sigma_i(k-s)$
  is block diagonal (with $2\times 2$ blocks $\Pi_{s=2}^{k-t+1} \Sigma_i(k-s)$), and so:
  $
  \|\Phi(k,t)\| = \max_{i=1,...,N} \left\|  \Pi_{s=2}^{k-t+1} \Sigma_i(k-s)    \right\|.
  $
  We proceed by calculating $\left\|  \Pi_{s=2}^{k-t+1} \Sigma_i(k-s)    \right\|$.
  We distinguish two cases: $i=1$, and $i>1.$

\textbf{Case $i=1$}. As $\lambda_1(\widetilde{W})=0$, for all $t$,
 $\Sigma_1(t)=\Sigma_1$ is a constant matrix, with $
    \left[\Sigma_1\right]_{21} = 1,
$ and the entries $(1,1)$, $(1,2)$ and $(2,2)$ of $\Sigma_1$ are zero.
Note that $\|\Sigma_1\|=1$, and $(\Sigma_1)^s=0$, $s \geq 2$. Thus, as long as $k>t+1$, the product $ \Pi_{s=2}^{k-t+1} \Sigma_i(k-s) = 0$, and so:
\begin{equation}
\label{case-i=1}
\left\|  \Pi_{s=2}^{k-t+1} \Sigma_1(k-s)    \right\| =
\left\{ \begin{array}{rl}
 1 &\mbox{ if $k=t+1$} \\
 \phantom{-} 0 &\mbox{ if $k>t+1.$}
       \end{array} \right.
\end{equation}
%

\textbf{Case $i>1$}. To simplify notation, let $\lambda_i: = \lambda_i(\widetilde{W})$, and recall $\lambda_i \in (0,1);$ $\Sigma_i(t)$~is:
$
   \Sigma_i(t) = \widehat{\Sigma}_i - \frac{3}{t+3} \Delta_i$, where: 1)
    $[\widehat{\Sigma}_i]_{11}=2 \lambda_2$, $[\widehat{\Sigma}_i]_{12}=-\lambda_i,$
     $[\widehat{\Sigma}_i]_{21}=1$, and $[\widehat{\Sigma}_i]_{22}=0;$
      and 2) $[\Delta_i]_{11}=-[\Delta_i]_{12}=\lambda_i$,
      and $[\Delta_i]_{21}=[\Delta_i]_{22}=0.$
; $\widehat{\Sigma}_i$ is diagonalizable, with $\widehat{\Sigma}_i =  \widehat{\mathcal{Q}_i}  \widehat{\mathcal{D}}_i \widehat{\mathcal{Q}_i}^{-1}$, and:
{
{
\small{
\begin{eqnarray*}
\widehat{\mathcal{Q}_i} &=& \left[ \begin{array}{cc}
    {{\lambda_i}} + \mathbf{j} \sqrt{ {{\lambda_i}}(1- {{\lambda_i}})} & {{\lambda_i}}-\mathbf{j} \sqrt{ {{\lambda_i}}(1- {{\lambda_i}})} \\
    1 & 1
    \end{array} \right] \\
    \widehat{\mathcal{D}}_i & =&
    \left[ \begin{array}{cc}
    {{\lambda_i}} + \mathbf{j} \sqrt{ {{\lambda_i}}(1- {{\lambda_i}})} & 0 \\
    0 &  {{\lambda_i}} - \mathbf{j} \sqrt{ {{\lambda_i}}(1- {{\lambda_i}})}
    \end{array} \right].
\end{eqnarray*}
}
}
}
(Note that the matrices $\widehat{\mathcal{Q}}_i $ and $ \widehat{\mathcal{D}}_i$ are complex.)
Denote by
$
{{\mathcal{D}_i}}(t) = \widehat{\mathcal{D}}_i - \frac{3}{t+3} \widehat{\mathcal{Q}_i}^{-1} \Delta_i \widehat{\mathcal{Q}_i}.
$
Then,
$
\Sigma_i(t) = \widehat{\mathcal{Q}}_i \left( \widehat{\mathcal{D}}_i -\frac{3}{t+3} \widehat{\mathcal{Q}}_i^{-1}  \Delta_i \widehat{\mathcal{Q}}_i\right)
 \widehat{\mathcal{Q}}_i^{-1} = \widehat{\mathcal{Q}_i} {{\mathcal{D}_i}}(t) \widehat{\mathcal{Q}_i}^{-1}.
$
By the sub-multiplicative property of norms, and using
$\left\|\widehat{\mathcal{Q}_i} \right\| \leq    \sqrt{2} \left\| \widehat{\mathcal{Q}_i} \right\|_{\infty}  = 2\sqrt{2}$,
$\left\|\widehat{\mathcal{Q}_i}^{-1} \right\| \leq \sqrt{2} \left\| \widehat{\mathcal{Q}_i}^{-1} \right\|_{\infty} = \frac{2 \sqrt{2}}{\sqrt{{{\lambda_i}}(1-{{\lambda_i}})}}$:
\begin{equation}
\label{eqn-app}
\| \Pi_{s=2}^{k-t+1} \Sigma_i(k-s) \| \leq \frac{8}{\sqrt{\lambda_i(1-\lambda_i)}} \, \Pi_{s=2}^{k-t+1}
\|{{\mathcal{D}_i}}(k-s)\| .
\end{equation}
%
%
%
%
%
%
It remains to upper bound $\|{{\mathcal{D}_i}}(t)\|$, for all $t=-1,0,1,...$ We will show that
\begin{equation}
\label{eqn-bnd-D-t}
\|{{\mathcal{D}_i}}(t)\| \leq \sqrt{\lambda_i}, \:\:\forall t=-1,0,1,...
\end{equation}
 Denote by $a_t=\frac{3}{t+3}$, $t=0,1,...,$ and $a_{-1}=1$.
 After some algebra, the entries of $\mathcal{D}_i(t)$ are:
 $
 [{{\mathcal{D}_i}}(t)]_{11} = \left( [{{\mathcal{D}_i}}(t)]_{22} \right)^H
 =
 \frac{1}{2}(2-a_t)(\lambda_i+\mathbf{j}\sqrt{\lambda_i(1-\lambda_i)})$,
 $[{{\mathcal{D}_i}}(t)]_{12} = \left(
 [{{\mathcal{D}_i}}(t)]_{21} \right)^H = a_t(\lambda_i + \mathbf{j} \sqrt{\lambda_i(1-\lambda_i)}),$
 %
%
%
%
which gives:
$[{\mathcal{D}_i(t)}^H {\mathcal{D}_i}(t)]_{11}=
[{\mathcal{D}_i(t)}^H {\mathcal{D}_i}(t)]_{22}= \frac{a_t^2 + (2-a_t)^2}{4} \lambda_i$,
 and $[{\mathcal{D}_i(t)}^H {\mathcal{D}_i}(t)]_{12}=\left( [{\mathcal{D}_i(t)}^H {\mathcal{D}_i}(t)]_{21}\right)^H
  =  \frac{a_t(2-a_t)}{2} \left( 2\lambda_i^2-\lambda_i - 2 \mathbf{j} \lambda_i \sqrt{\lambda_i(1-\lambda_i)}\right).$
%
%
%
%
Next, very interestingly:
$
\|{{\mathcal{D}_i}}^H(t)  {{\mathcal{D}_i}}(t)\|_1 = \left\| \left[ {{\mathcal{D}_i}}^H(t)  {{\mathcal{D}_i}}(t)\right]_{11} \right\|
 + \left\| \left[ {{\mathcal{D}_i}}^H(t)  {{\mathcal{D}_i}}(t)\right]_{12} \right\|
 = \frac{1}{4} (a_t^2+(2-a_t)^2) \lambda_i + \frac{1}{2} a_t(2-a_t) \lambda_i ,
 = \lambda_i.
$
 for any $a_t \in [0,2]$, which is the case here
 because $a_t=3/(t+3)$, $t=0,1,...$, and $a_{-1}=0.$
Thus, as $\|A\| \leq \|A\|_1$ for a Hermitean matrix $A$:
$
\|{{\mathcal{D}_i}}(t)\| = \sqrt{\|{{\mathcal{D}_i}}^H(t)  {{\mathcal{D}_i}}(t)\|}
\leq
\sqrt{\|{{\mathcal{D}_i}}^H(t)  {{\mathcal{D}_i}}(t)\|_1}=\sqrt{\lambda_i}.
$
Applying the last equation and \eqref{eqn-bnd-D-t} to \eqref{eqn-app},
we get, for $i \neq 1$:
$
\| \Pi_{s=2}^{k-t+1} \Sigma_i(k-s) \|
\leq
\frac{8}{\sqrt{\lambda_i(1-\lambda_i)}} \left(\sqrt{\lambda_i}\right)^{k-t} ,\:k\geq t+1.
$
%
%
Combine the latter with~\eqref{case-i=1}, and use
$\|\Phi(k,t)\| = \max_{i=1,...,N} \| \Pi_{s=2}^{k-t+1} \Sigma_i(k-s) \|$,
 Assumption~\ref{assumption-network}~(b) and $\lambda_N(\widetilde{W})=\mu(W)$, to~obtain:
\begin{eqnarray}
\| \Phi(k,t) \| &\leq& \frac{8 \left( \sqrt{\mu(W)}\right)^{k-t}}{\min_{i \in \{2,N\}} \sqrt{\lambda_i(\widetilde{W})(1-\lambda_i(\widetilde{W}))}} \nonumber \\
\label{eqn-bound-norm-P-k-t}
&\leq& \hspace{-2mm}\frac{8}{\sqrt{{\eta}(1-\mu(W))}}\left( \sqrt{\mu(W)}\right)^{k-t}\hspace{-2mm},\,k \geq t.
%
\end{eqnarray}
%
%
%
%
%
\subsubsection{Summation} We apply \eqref{eqn-bound-norm-P-k-t}
 to~\eqref{eqn-ol2}. Using the sub-multiplicative and sub-additive properties of norms, expression $\alpha_t=c/(t+1)$,
 and the inequalities
 $\|\widetilde{x}(k)\| \leq \left\|(\widetilde{x}(k)^\top, \widetilde{x}(k-1)^\top)^\top\right\|$,
 $\left\|(-(I-J)\nabla F(y(t))^\top,\,0^\top)^\top\right\| \leq \sqrt{N}\, {G}$:
 {\allowdisplaybreaks{
\begin{eqnarray}
 \|\widetilde{x}(k)\|
 \label{eqn-x-tilde-bnd}
 &\leq&
\frac{8\, \sqrt{N}\, c \, {G}}{\sqrt{{\eta}(1-\mu(W))}} \\
&\times&
 \sum_{t=0}^{k-1}  \, \left(\sqrt{\mu(W)}\right)^{k-(t+1)} \frac{1}{(t+1)}. \nonumber
 \end{eqnarray}}}
We now denote by $r:=\sqrt{\mu(W)}\in (0,1)$. To complete the proof of the Lemma, we upper bound the sum $\sum_{t=0}^{k-1}r^{k-(t+1)}\frac{1}{(t+1)}$
 by splitting it into two sums. With the first sum, $t$ runs from zero to $\lceil k/2 \rceil$, while with the second sum, $t$ runs
 from $\lceil k/2 \rceil+1$ to $k:$
%
%
%
%
%
%
%
%
{\allowdisplaybreaks{
\begin{eqnarray}
&\,&\sum_{t=0}^{k-1} \frac{r^{k-(t+1)}}{t+1} \hspace{-1mm} = \hspace{-1mm}
\left(r^{k-1} + r^{k-2}\frac{1}{2}+...+r^{\lceil k/2 \rceil }\frac{1}{\lceil k/2 \rceil } \right) \nonumber\\
&+& \left( r^{k - (\lceil k/2 \rceil +1)} \frac{1}{\lceil k/2 \rceil +1}  +...+\frac{1}{k} \right) \nonumber \\
&\leq&
r^{k/2} \left(1+\frac{1}{2}+...+\frac{1}{k/2}+ \frac{1}{(k+1)/2}\right) \nonumber \\
&+& \frac{1}{(k/2)} \left( 1+r+...+r^{k}\right) \nonumber \\
%
%
%
%
%
\label{eqn-chain-1}
&\leq&
r^{k/2} \left( \log(1+k/2) + 2 \right) + \frac{2}{k}\frac{1}{1-r}\\
\label{eqn-chain-2}
&=&
2 \, \left\{ r^{k/2} \log(1+k/2) (k/2) \right\}\frac{1}{k} +
\left\{ 4 r^{k/2} (k/2) \right\} \frac{1}{k}  \nonumber\\
&+&   \frac{2}{k}\frac{1}{1-r}\\
\label{eqn-chain-3}
&\leq&
2 \, \sup_{z\geq 1/2}\left\{ r^{z} \log(1+z) z \right\}\frac{1}{k} +
4 \sup_{z \geq 1/2} \left\{ r^{z} z \right\} \frac{1}{k}  \nonumber\\
&+&   \frac{2}{k}\frac{1}{1-r}\\
&\leq&
\label{eqn-chain-4}
\left( 2 \mathcal{B}(r) + \frac{4}{e(-\log r)}+\frac{2}{1-r}\right)\frac{1}{k}\\
&\leq&
\left( 2\mathcal{B}(r) + \frac{7}{1-r^2} \right)\frac{1}{k}.\nonumber
\end{eqnarray}}}
Inequality~\eqref{eqn-chain-1} uses
the inequality $1+\frac{1}{2}+...+\frac{1}{t} \leq \log t + 1,\,\:t=1,2,...,$
 and $1+r+...+r^k \leq \frac{1}{1-r}$;~\eqref{eqn-chain-2}
  multiplies and divides the first summand on the right hand side of~\eqref{eqn-chain-1} by $k/2$;~\eqref{eqn-chain-3}
   uses $r^{k/2} \log(1+k/2) (k/2) \leq \sup_{z \geq 1/2} r^z \log(1+z)z$, for all $k=1,2,...$,
   and a similar bound for the second summand in~\eqref{eqn-chain-2}; the left inequality in~\eqref{eqn-chain-4}
    uses $\mathcal{B}(r):=\sup_{z \geq 1/2}r^z \log(1+z) z$
     and $\sup_{z \geq 1/2} r^z \,z \leq \frac{1}{e\,(-\log r)}$ (note that $r^z\,z$ is convex in~$z$; we take the derivative
     of $r^z \,z$ with respect to $z$ and set it to zero);
      and the right inequality in~\eqref{eqn-chain-4} uses $-1/\log r \leq 1/(1-r)$, $\forall r\in [0,1)$;
$1/(1-r) \leq 2/(1-r^2)$, $\forall r \in [0,1)$, and $e=2.71...$
%
%
%
%
Applying the last to~\eqref{eqn-x-tilde-bnd}, and using the~$C_{\mathrm{cons}}$ in~\eqref{eqn-C-cons},
Theorem~\ref{theorem-bound-distance-to-consensus} for $\|\widetilde{x}(k)\|$ follows.
 Then, as $\widetilde{y}(k)=\widetilde{x}(k)+\frac{k-1}{k+2}(\widetilde{x}(k)-\widetilde{x}(k-1))$, we have that $\|\widetilde{y}(k)\| \leq 2 \|\widetilde{x}(k)\| + \|\widetilde{x}(k-1)\|$.
  Further, by Theorem~\ref{theorem-bound-distance-to-consensus}: $\|\widetilde{x}(k-1)\| \leq c \sqrt{N} G C_{\mathrm{cons}} \frac{1}{k-1}\frac{k}{k}
   \leq 2 c \sqrt{N} G C_{\mathrm{cons}} \frac{1}{k}$, $k \geq 2$,
   and $\|\widetilde{x}(0)\| =0$ (by assumption). Thus,
   $\|\widetilde{x}(k-1)\| \leq 2 c \sqrt{N} G C_{\mathrm{cons}} \frac{1}{k}$, $\forall k \geq 1.$
    Thus, $\|\widetilde{y}(k)\| \leq 2 \|\widetilde{x}(k)\| + \|\widetilde{x}(k-1)\|
    \leq 4 c \sqrt{N} G C_{\mathrm{cons}} \frac{1}{k}$, $\forall k \geq 1.$
\vspace{-0mm}
\subsection{Proof of the lower bound in~\eqref{eqn-claim-nedic-hard} on the worst-case optimality gap for~\cite{nedic_T-AC}}
\label{subsection-appendix-lower-bound-nedic}
Consider the $f_i$'s in~\eqref{eqn-f-i-s-hard}, the initialization $x_i(0)=(1,0)^\top$, $i=1,2,$ and
$W_{12}=W_{21}=1-W_{11}=1-W_{22}=w=1/8$, as we set in Subsection~\ref{subsection-comparisons}.
We divide the proof in four steps. First, we prove certain properties of~\eqref{eqn-opt-prob-original}
 and the $f_i$'s in~\eqref{eqn-f-i-s-hard}; second,
 we solve for the state $x(k)=(x_1(k)^\top,x_2(k)^\top)^\top$ with the algorithm in~\cite{nedic_T-AC};
 third, we upper bound $\|x(k)\|$; finally, we use the latter bound to
 derive the $\Omega(1/k^{2/3})$ worst-case optimality gap.

\subsubsection*{Step~1: Properties of the $f_i^{\theta}$'s}
Consider the $f_i^{\theta}$'s in~\eqref{eqn-f-i-s-hard} for a fixed $\theta \in [0,1]$.
 The solution to~\eqref{eqn-opt-prob-original}, with $f(x)=f_1^{\theta}(x)+f_2^{\theta}(x)$, is
${x^\star}=(0,0)^\top$, and the corresponding optimal value is
 $f^\star = {\theta}+1.$
%
%
%
%
%
%
%
%
%
Further, the $f_i^{\theta}$'s
belong to the class $\mathcal{F}(L=\sqrt{2},G=10).$ (Proof is in~\cite{arxivVersion}.)

\subsubsection*{Step~2: Solving for $x(k)$ with the algorithm in~\cite{nedic_T-AC}}
Now, consider the algorithm in~\cite{nedic_T-AC},
and consider $x_i(k)$--the solution estimate
 at node $i$ and time $k$.
 Denote by $x^{l}(k)=(x_1^{(l)}(k),x_2^{(l)}(k))^\top$--the
 vector with the $l$-th coordinate
  of the estimate of both nodes, $l=1,2$; and
  $d^l(k) =
  \left( \frac{ \partial  f_1 (x_1(k))} {\partial x^{(l)}},\,   \frac{ \partial f_2 (x_2(k))} {\partial x^{(l)}}\right)^\top$, $l=1,2$.
 Then, the update rule of~\cite{nedic_T-AC} is, for the
 $f_1^{\theta},f_2^{\theta}$ in~\eqref{eqn-f-i-s-hard}:
 \begin{eqnarray}
 \label{eqn-nedic-update-gen}
 x^l(k) &=& W x^l(k-1) - \alpha_{k-1} d^l(k-1) \\
 k&=&1,2,...,\:\:\:l=1,2. \nonumber
 \end{eqnarray}
 %
 %
 %
 Recall the ``hard'' initialization $x^{\mathrm{I}}(0)=(1,1)^\top$, $x^{\mathrm{II}}(0)=(0,0)^\top.$
 Under this initialization:
 {\allowdisplaybreaks{
 \begin{eqnarray}
 \label{eqn-x-i-k-belongs-R-i}
 &\,&x_i(k)  \in \mathcal{R}_i:=\\
 &\,&
 \left\{x \in {\mathbb R}^2:\, {\theta}(x^{(1)}+(-1)^i)^2 +(x^{(2)}+(-1)^i)^2 \leq {\overline \chi}^2 \right\},\nonumber
 \end{eqnarray}}}
for all $k$, for both nodes $i=1,2$ (proof in~\cite{arxivVersion}.)
Note that $\mathcal{R}_i$ is the region where the $f_i^{\theta}$ in~\eqref{eqn-f-i-s-hard} is quadratic.
   Thus, evaluating $\nabla f_i^{\theta}$'s in the quadratic region:
%
%
 %
 %
 \begin{eqnarray}
 \label{eqn-nedic-alg}
 x^{l}(k) = \left( W-\alpha_{k-1} {\kappa^l} I\right) x^{\mathrm{I}}(k-1) - \alpha_{k-1} {\kappa^l} \left( -1,1\right)^\top,
 \end{eqnarray}
 $l=1,2,$ where $\kappa^{\mathrm{I}}=\theta$ and $\kappa^{\mathrm{II}}=1.$
 %
%
%
%
%
%
  We now evaluate $\sum_{i=1}^2 \left( f(x_i(k))-f^\star \right),$ $f(x)=f_1^{\theta}(x)+f_2^{\theta}(x)$.
     Because $x_i(k) \in \mathcal{R}_i$, $i=1,2,$
     verify, using~\eqref{eqn-f-i-s-hard}, and
     $f^\star=1+{\theta}$, that:
     \begin{equation}
     \label{eqn-sum-opt-gaps}
     \sum_{i=1}^2 \left(f(x_i(k))-f^\star \right) = {\theta} \|x^{\mathrm{I}}(k)\|^2 + \|x^{\mathrm{II}}(k)\|^2.
     \end{equation}
   %
   %
      By unwinding~\eqref{eqn-nedic-alg},
      and using $x^{\mathrm{I}}(0)=(1,1)^\top$, $x^{\mathrm{II}}(0)=(0,0)^\top$:
      {\allowdisplaybreaks{
     \begin{eqnarray*}
     &x^{\mathrm{I}}(k)& = \\
     &\,& \hspace{-7mm}\left( W- \alpha_{k-1} {\theta} I\right)\left( W - \alpha_{k-2} {\theta}I\right)...
     \left( W - \alpha_{0} {\theta} I\right)( 1,1)^\top \\
     &+&
     \theta\,( \sum_{t=0}^{k-2} (W - \alpha_{k-1}{\theta} I)(W - \alpha_{k-2}{\theta} I) ...\\
     &\times& (W - \alpha_{t+1}{\theta} I) \alpha_t + \alpha_{k-1} I) (1,-1)^\top \nonumber \\
     &x^{\mathrm{II}}(k)& =
     \\
     &\,&(\sum_{t=0}^{k-2} (W-\alpha_{k-1}I)(W - \alpha_{k-2}I) ...\\
     &\times& (W - \alpha_{t+1}I) \alpha_t
      + \alpha_{k-1} I) (1,-1)^\top .
     \end{eqnarray*}
     }}
      Consider the eigenvalue decomposition $W=Q \Lambda Q^\top$,
      where $Q=[q_1,q_2]$, $q_1=\frac{1}{\sqrt{2}}(-1,1)^\top$, $q_2=\frac{1}{\sqrt{2}}(1,1)^\top$,
and $\Lambda$ is diagonal with the eigenvalues $\Lambda_{11}=\lambda_1 = 1-2w = 3/4$, $\Lambda_{22}=\lambda_2=1.$
       The matrix $W- \alpha_{k-1} {\theta} I$ decomposes
        as $W- \alpha_{k-1} {\theta} I = Q(\Lambda - \alpha_{k-1}{\theta} I)Q^\top$; likewise,
        $W- \alpha_{k-1}  I = Q(\Lambda - \alpha_{k-1} I)Q^\top$.
         Then, $(W - \alpha_{k-1}{\theta} I)(W - \alpha_{k-2}{\theta} I) ...(W - \alpha_{t+1}{\theta} I)
         = Q (\Lambda - \alpha_{k-1}{\theta} I)...(\Lambda - \alpha_{t+1}{\theta} I) Q^\top$,
          and $(W - \alpha_{k-1} I)...(W - \alpha_{t+1} I)
         = Q (\Lambda - \alpha_{k-1} I)...(\Lambda - \alpha_{t+1} I) Q^\top$.
      Using these decompositions, and the orthogonality: $q_1^\top (1,1)^\top = 0$, and $q_2^\top (-1,1)^\top = 0$:
      {\allowdisplaybreaks{
     \begin{eqnarray}
     \label{eqn-x-1-sol}
     x^{\mathrm{I}}(k)\hspace{-2.5mm} &=& \hspace{-2.7mm}\left( 1 - \alpha_{k-1} {\theta}\right)\left( 1 - \alpha_{k-2} {\theta}\right)...
     \left( 1 - \alpha_{0} {\theta}\right)(1,1)^\top \\
     &+&
     {\theta} (1,-1)^\top ( \sum_{t=0}^{k-2} (\lambda_1 - \alpha_{k-1}{\theta})(\lambda_1 - \alpha_{k-2}{\theta}) ...\nonumber\\
     &\times& (\lambda_1 - \alpha_{t+1}{\theta}) \alpha_t + \alpha_{k-1} )\nonumber \\
     \label{eqn-x-2-xol}
     x^{\mathrm{II}}(k) &=&
     (1,-1)^\top (\sum_{t=0}^{k-2} (\lambda_1-\alpha_{k-1})(\lambda_1 - \alpha_{k-2}) ...\\
     &\times&(\lambda_1 - \alpha_{t+1}) \alpha_t
      + \alpha_{k-1}).\nonumber
     \end{eqnarray}}}

\subsubsection*{Step~3: Upper bounding $\|x(k)\|$}
     Note that $\lambda_1 - \alpha_{k-1} {\theta} =
     3/4 - \frac{c {\theta} } {k^\tau} \geq  1/4$, for all $k,\tau,c.$
      Also, $\lambda_1 - \alpha_{k-1} {\theta} \leq \lambda_1 = 3/4$,
      for all $k,\tau,c.$ Similarly, we can show $1 - \alpha_{k-1} {\theta} \in [1/2,1];$ then,
        $(1-\alpha_{k-1}{\theta})...(1-\alpha_0 {\theta}) \geq 0$,
        $(\lambda_1-\alpha_{k-1}{\theta})...(\lambda_1-\alpha_{t+1} {\theta}) \geq 0$,
         and $(\lambda_1-\alpha_{k-1})...(\lambda_1-\alpha_{t+1}) \geq 0$, $\forall t$. Thus:
       $
       \|x^{\mathrm{I}}(k)\| \geq
        \left( 1 - \alpha_{k-1} {\theta}\right)\left( 1 - \alpha_{k-2} {\theta}\right) ... \left( 1 - \alpha_{0} {\theta}\right).
       $
       Set $\theta=\theta_k=1/(s_k(\tau)) \leq 1$, where $s_k(\tau):=\sum_{t=0}^{k-1} (t+1)^{-\tau};$ use $(1-a_1)(1-a_2)...(1-a_n) \geq 1-(a_1+a_2+...+a_n)$, $a_i \in [0,1)$, $\forall i$; and
       $\alpha_k = \frac{c}{(k+1)^\tau}$. We obtain:
        $\|x^{\mathrm{I}}(k)\| \geq 1-c\,\theta_k\,s_k(\tau)$, and so:
       %
       %
       %
       %
       %
       %
       $
       \theta_k \|x^{\mathrm{I}}(k)\|^2
       %
       \geq \frac{(1-c_{\mathrm{max}})^2}{s_k(\tau)},
       $
       where we denote $c_{\mathrm{min}}:=c_0$ and $c_{\mathrm{max}}:=1/(2 L)=1/(2 \sqrt{2}).$
       Further, from~\eqref{eqn-x-2-xol}:
       $
       \|x^{\mathrm{II}}(k)\|^2 \geq \alpha_{k-1}^2 \geq \frac{c_{\mathrm{min}}^2}{k^{2 \tau}},
       $
       and we obtain:
       \vspace{-0mm}
       \begin{equation}
       \label{eqn-x-1-x-2-ower-bnd}
       \theta_k\,\|x^{\mathrm{I}}(k)\|^2+\|x^{\mathrm{II}}(k)\|^2 \geq \frac{(1-c_{\mathrm{max}})^2}{s_k(\tau)}+\frac{c_{\mathrm{min}}^2}{k^{2\tau}}.
       \end{equation}
       \subsubsection*{Step~4: Upper bounding the optimality gap from~\eqref{eqn-x-1-x-2-ower-bnd}}
       From~\eqref{eqn-x-1-x-2-ower-bnd},
       and using~\eqref{eqn-sum-opt-gaps}:
       \begin{eqnarray}
       &\,&\max_{i=1,2} \left( f(x_i(k)) -f^\star \right) \geq
       \frac{1}{2} \sum_{i=1}^2 \left( f(x_i(k)) -f^\star \right) \nonumber \\
       \label{eqn-inf-e-k}
       &\geq& \frac{(1-c_{\mathrm{max}})^2}{2 s_k(\tau)} + \frac{c_{\mathrm{min}}^2}{2\,k^{2 \tau}}=:e_k(\tau),
       \end{eqnarray}
       $\forall k \geq 1$, $\forall \tau \geq 0.$
       %
       %
       %
        We further upper bound the right hand side in~\eqref{eqn-inf-e-k} by taking the infimum
        of $e_k(\tau)$ over $\tau \in [0,\infty)$; we split
         the interval $[0,\infty)$ into $[0,3/4]$; $[3/4,1],$
           and $[1,\infty),$
           so that
           \begin{equation}
           \label{eqn-min-infima}
            \inf_{[0,\infty)} e_k(\tau)
            \geq \min \left\{ \inf_{[0,3/4]}e_k(\tau),\,\inf_{[3/4,1]}e_k(\tau),\,\inf_{[1,\infty)}e_k(\tau) \right\}.
           \end{equation}
       It is easy to prove that: 1)~$\inf_{[0,3/4)}e_k(\tau)=\Omega(1/k^{2/3})$;
        2)~using $
        s_k(\tau)\leq 3 (\log k) (k+1)^{1-\tau},
       $ $\forall k \geq 3$, $\forall \tau \in [0,1]$,
        that $\inf_{[3/4,1]}e_k(\tau)=\Omega\left(\frac{1}{(\log k) k^{1/4}}\right)$;
        and 3)~$\inf_{[1,\infty)}e_k(\tau)=\Omega\left( \frac{1}{\log k}\right)$. (see~\cite{arxivVersion}.)
        Combining the latter bounds with~\eqref{eqn-min-infima} completes the proof
       of~\eqref{eqn-claim-nedic-hard}.

\subsection{Relaxing bounded gradients: Proof of~\eqref{eqn-D-NC-worst-case-lower-bnd}
 for D--NC}
\label{subsection-lower-bound-without-bdd-grad}
We prove~\eqref{eqn-D-NC-worst-case-lower-bnd} for D--NC
while the proof of D--NG is similar and is in~\cite{arxivVersion}.
Fix arbitrary $\theta>0$ and take the $f_i$'s in~\eqref{eqn-hard-f-i-s-D-NC-Lb}.
From~\eqref{eqn-our-alg-cons}--\eqref{eqn-our-alg-cons-druga}, evaluating the~$\nabla f_i$'s:
 {\small{
 \begin{eqnarray}
 \label{eqn-update-x-y-d-nc-proof}
 x(k) \hspace{-0.5mm} &=& \hspace{-0.5mm} (1-\alpha)W^{\tau_x(k)}y(k-1) +\alpha\,\theta \,W^{\tau_x(k)}(1,-1)^\top\\
 y(k)\hspace{-0.5mm} &=& \hspace{-0.5mm} W^{\tau_y(k)}\left( x(k)+\beta_{k-1}(x(k)-x(k-1))\right), \nonumber
 \end{eqnarray}
 }}
for $k=1,2,...$ We take the initialization at the solution $x(0)=y(0)=(0,0)^\top.$
 Consider the eigenvalue decomposition $W=Q \Lambda Q^\top$, with
 $Q=[q_1,q_2]$, $q_1=\frac{1}{\sqrt{2}}(1,-1)^\top$,
$q_2=\frac{1}{\sqrt{2}}(1,1)^\top$, and $\Lambda$
 is diagonal with $\Lambda_{11}=\lambda_1$, $\Lambda_{22}=\lambda_2=1.$
 Define ${z}(k)=Q^\top x(k)$ and ${w} (k) = Q^\top y(k)$.
 Multiplying~\eqref{eqn-update-x-y-d-nc-proof} from the left by $Q^\top$,
 and using $Q^\top(1,-1)^\top = (\sqrt{2},0)^\top$:
 \begin{eqnarray}
 {z}(k) &=& (1-\alpha)\Lambda^{\tau_x(k)}{w}(k-1)
 +\alpha\,\theta \Lambda^{\tau_x(k)}(\sqrt{2},0)^\top \nonumber \\
  \label{eqn-update-x-y-d-nc-proof-2}
 {w}(k) &=& \Lambda^{\tau_y(k)}\left[ {z}(k)+
 \beta_{k-1}({z}(k)-{z}(k-1))\right],
 \end{eqnarray}
$k=1,2,...$, and ${z}(0)={w}(0)=(0,0)^\top.$
Next, note that
\vspace{-3mm}
\begin{eqnarray}
\max_{i=1,2}(f(x_i(k))-f^\star) &\geq& \frac{1}{2}\sum_{i=1}^2(f(x_i(k))-f^\star)= \frac{\|x(k)\|^2}{2} \nonumber \\
\label{eqn-x-1-prime-to-bnd-opt-gap}
&=& \frac{\|{z}(k)\|^2}{2}\geq \frac{({z}^{(1)}(k))^2}{2}. 
\end{eqnarray}
Further, from~\eqref{eqn-update-x-y-d-nc-proof-2} for the first coordinate ${z}^{(1)}(k),w^{(1)}(k)$,
recalling that $\mu:=\lambda_1$:
{
{
 \begin{eqnarray}
 \label{eqn-update-x-y-d-nc-proof-3}
 \|{z}^{(1)}(k)\| &\leq& \mu^{\tau_x(k)}\|{w}^{(1)}(k-1)\|
 +\sqrt{2}\,\alpha\,\theta \,\mu^{\tau_x(k)}\\
 \|{w}^{(1)}(k)\| &\leq&
 \mu^{\tau_y(k)} \left(2\|{z}^{(1)}(k)\|+
 \|z^{(1)}(k-1)\|\right), \nonumber
 \end{eqnarray}}}
$k=1,2,...$ Note that~\eqref{eqn-update-x-y-d-nc-proof-3}
is analogous to~\eqref{eqn-111}--\eqref{eqn-222}
with the identification $\widetilde{x}(k) \equiv z^{(1)}(k)$,
 $\widetilde{y}(k) \equiv w^{(1)}(k)$,
 $\sqrt{N}G \equiv \sqrt{2}\theta$; hence,
analogously to the proof of Theorem~\ref{theorem-distance-cons-consensus-alg},
from~\eqref{eqn-update-x-y-d-nc-proof-3}:
$
\|w^{(1)}(k-1)\| \leq \frac{2\,\sqrt{2}\,\alpha\, \theta}{(k-1)^2},\:\:k=2,3...
$
Using the latter, \eqref{eqn-update-x-y-d-nc-proof-2}, and $\frac{1}{k^2} \geq \mu^{\tau_x(k)} \geq \frac{1}{e\,k^2}$~(see~\eqref{eqn-consensus}):
$
\|z^{(1)}(k)\| \geq \alpha\,\theta \sqrt{2}\mu^{\tau_x(k)} - \mu^{\tau_x(k)}\|w^{(1)}(k-1)\| \geq
\frac{\alpha \theta \sqrt{2}}{e\,k^2}\left(1-\frac{2\,e}{(k-1)^2}\right)
\geq \frac{\alpha\,\theta\sqrt{2}}{4\,k^2} >0,\:\forall k \geq 10.
$
Thus, from~\eqref{eqn-x-1-prime-to-bnd-opt-gap} and the latter inequality,
$\max_{i=1,2}(f(x_i(k))-f^\star) \geq \frac{\alpha^2 \theta^2}{16\, k^4}$,
which is, for $\alpha=1/(2L)=1/2$,
greater or equal $M$ for $\theta=\theta(k,M)=8\, \sqrt{M} \,k^2.$

\bibliographystyle{IEEEtran}
\bibliography{IEEEabrv,bibliographyNesterovNew}
\end{document}